\def\RSthmtxt{theorem~}\newref{thm}{name = \RSthmtxt}}
\def\RSlemtxt{lemma~}\newref{lem}{name = \RSlemtxt}}
\numberwithin{equation}{section}
\numberwithin{figure}{section}
  \theoremstyle{plain}
  \newtheorem{prop}{\protect\propositionname}
  \theoremstyle{remark}
  \newtheorem{rem}{\protect\remarkname}
  \theoremstyle{definition}
  \newtheorem{defn}{\protect\definitionname}
  \theoremstyle{plain}
  \newtheorem{lem}{\protect\lemmaname}
  \theoremstyle{plain}
  \newtheorem*{prop*}{\protect\propositionname}
  \providecommand{\definitionname}{Definition}
  \providecommand{\lemmaname}{Lemma}
  \providecommand{\propositionname}{Proposition}
  \providecommand{\remarkname}{Remark}
\begin{document}

\includepdf[pages={-}]{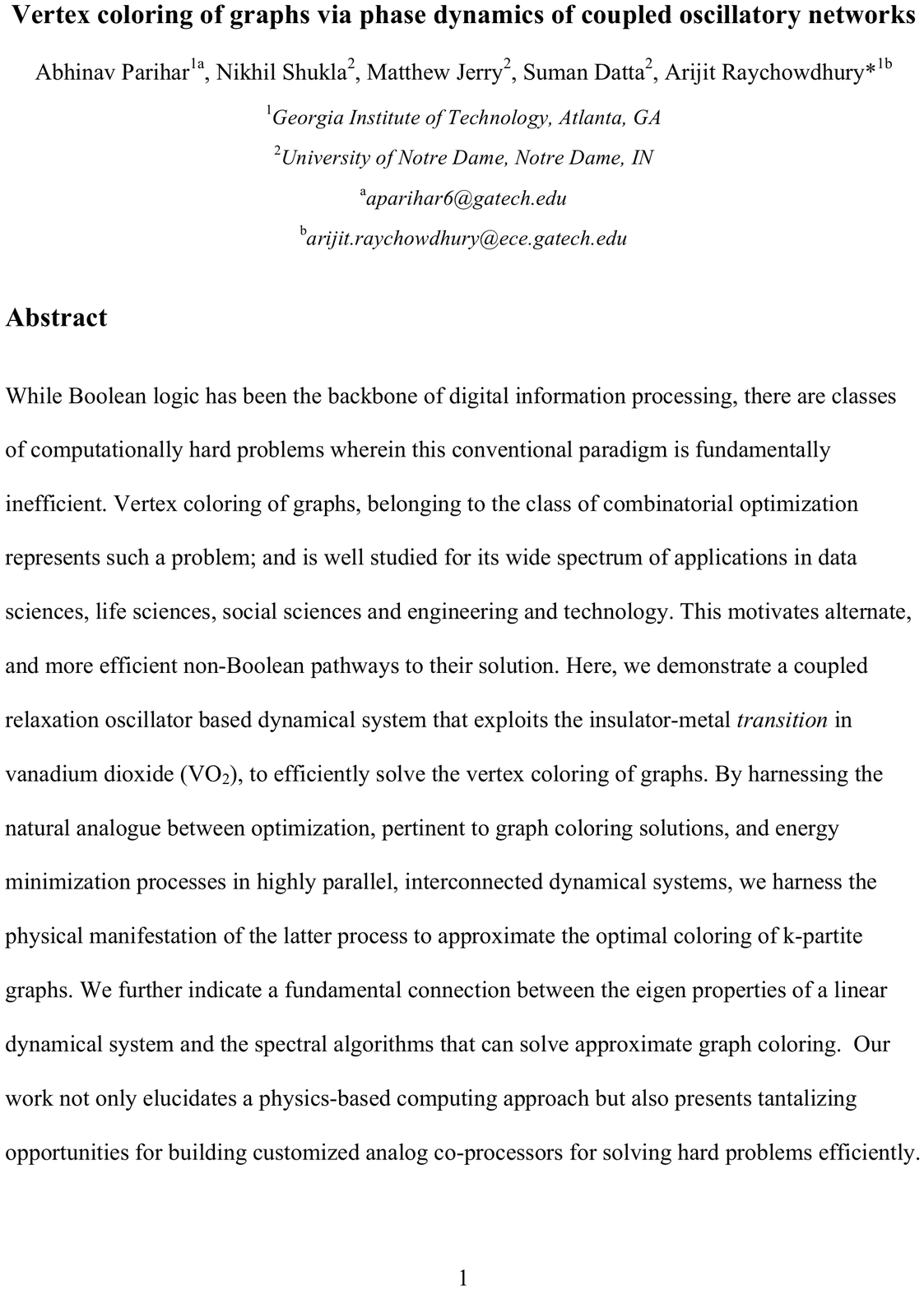}

\title{Vertex coloring of graphs via phase dynamics of coupled oscillatory
networks\\
(Supplementary Text)}

\author{Abhinav Parihar, Nikhil Shukla, Matthew Jerry, Suman Datta, Arijit
Raychowdhury}
\maketitle

\section*{Notations}
\begin{itemize}
\item Scalars and vectors are denoted by lower case variables.
\item Matrices are denoted by upper case variables.
\item Single subscripts denote indices for vectors and corresponding columns
for matrices.
\item Double subscripts denote corresponding elements for matrices.
\item General results about the asymptotic order are proved using $x$ as
the state vector. In the context of the paper, the system being described
is the relaxation oscillator system and the state vector $x$ refers
to the output voltage $v(t)$.
\item The state vector representing states of all oscillators is denoted
by lower case $s$ and the diagonal matrix constructed using the state
vector as diagonal is denoted by upper case $\hat{S}$.
\end{itemize}

\section*{Summary}

Following sections describe the proposed coupled relaxation oscillator
system in detail.
\begin{itemize}
\item Section 1 describes the piecewise linear dynamics of a system of a
coupled relaxation oscillators.
\item Section 2 focusses on dynamics in the particular discharge state $s=0$
and explains its relevance and the relationship between eigenvectors
of the coefficient matrix and the asymptotic order of components of
the state vector $x$ in the discharge state $s=0$. 
\item Section 3 discusses similar arguments in other states $s\neq0$.
\item Section 4 explains the reformulation of vertex coloring as vertex
color-sorting.
\item In section 5 we discuss the existence of a periodic cycle in the case
of complete partite graphs with equal nodes in each class of the partition.
The current system can provide the correct, albeit non-optimal coloring
for sparse graphs.
\item In section 6 we give reasons for extending such arguments to general
graphs and why the system moves away from the conditions as graphs
become sparser.
\item Section 7 describes necessary background for the experimental implementation
of such coupled relaxation oscillators using VO\textsubscript{2}
(Vanadium Dioxide) devices.
\item The Appendix contains some results useful for analyses in Section
5.
\end{itemize}

\section{\label{sec:intro}Dynamics of a system of coupled relaxation oscillators}

\begin{figure}
\begin{centering}
\includegraphics{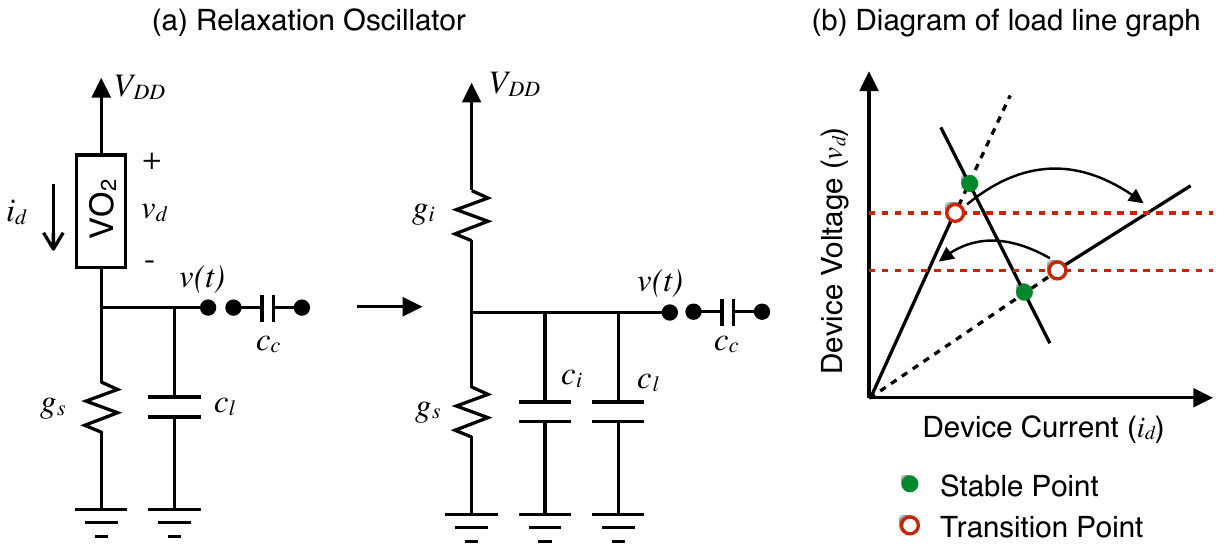}
\par\end{centering}
\caption{(a) A relaxation oscillator circuit and its equivalent circuit in
terms of intrinsic conductance and capacitance. (b) Load line graph
and I-V curve of the device showing transition points, stable points
and oscillations due to hysteresis.\label{fig:circuit}}
\end{figure}

We consider a system of $n$ coupled $VO_{2}$ oscillators, where
each oscillator is a series combination of a $VO_{2}$ device, and
a parallel combination of a series conductance $g_{s}$ and a loading
capacitance $c_{l}$. The $VO_{2}$ device is an MIT (metal-insulator-transition)
device which switches between a metallic state and an insulating state
depending on the voltage $v$ across it. When $v>v_{h}$ the device
switches to a metallic state, and when $v<v_{l}$ the device switches
to an insulating state. $v_{l}\neq v_{h}$ and there is hysteresis,
i.e. system tries to retain the last state when $v_{l}\leq v\leq v_{h}$.
When a VO\textsubscript{2} device is connected in series with a resistance
of appropriate magnitude, it shows self sustained oscillations. As
can be seen in figure \ref{fig:circuit}b, because the stable points
of the circuit in both the states (metallic and insulating) lie outside
the region of operation, i.e. they are preceded by a transition, the
system never settles to a point. 

The dynamics of the coupled system with $n$ oscillators coupled pairwise
to each other using capacitances can be written as: 
\begin{equation}
\left(C_{i}+C_{c}+C_{l}\right)v'(t)=-G(s)v(t)+H(s)\label{eq:system}
\end{equation}

where $s$ is the state of the system, $s=\{s_{1},s_{2},\cdots,s_{n}\}$,
$s_{k}$ being the state of $k^{th}$ oscillator and $v(t)$ is the
vector of all the output voltages of oscillators.. $C_{i}$ is the
intrinsic internal capacitance matrix and $C_{l}$ is the loading
capacitance matrix. These are diagonal matrices with each element
equal to the corresponding capacitance of the oscillator.
\[
C_{i}=\left(\begin{array}{ccc}
c_{i1} &  & 0\\
 & \ddots\\
0 &  & c_{in}
\end{array}\right),\,C_{l}=\left(\begin{array}{ccc}
c_{l1} &  & 0\\
 & \ddots\\
0 &  & c_{ln}
\end{array}\right)
\]
 where $c_{ik}$ is the internal capacitance and $c_{lk}$ is the
loading capacitance of $k^{th}$ oscillator. 

$C_{c}$ is the coupling capacitance matrix
\[
C_{c}=\left(\begin{array}{cccc}
\sum & -c_{c_{12}} & \cdots & -c_{c_{1N}}\\
-c_{c_{21}} & \sum &  & -c_{c_{2N}}\\
\vdots &  & \ddots\\
-c_{c_{N1}} & -c_{c_{N2}} &  & \sum
\end{array}\right)
\]
where $c_{c_{ij}}$ is the coupling capacitances between $i^{th}$
and $j^{th}$ oscillators, and $\sum$ represent the sum of rows (or
columns). When all the coupling capacitances are equal to $c_{c}$,
then $C_{c}$ is basically the scaled Laplacian matrix $L$ of the
graph with $C_{c}=c_{c}L=c_{c}(D-A)$ where $D$ is the diagonal matrix
of degrees of vertices and $A$ is the adjacency matrix of the graph.
It should be noted that the loading capacitances are chosen such that
$diag(C_{c}+C_{l})$ is constant. We envision a system where the oscillators
are connected in a graph which is topologically equivalent to the
input graph. As such the coupling matrix is programmed by the incidence
matrix of the input graph, For each row $i$ in $C_{c}$ every absent
edge $ij$ in the graph adds a loading capacitance of magnitude $c_{c}$
to the $i^{th}$ node to maintain a constant $diag(C_{c}+C_{l})$.
This ensures equal loading effect for all the nodes and symmetric
dynamics. 

$G(s)$ and $H(s)$ are state dependent matrices

\[
G(s)=\left(\begin{array}{ccc}
g_{1}(s_{1}) &  & 0\\
 & \ddots\\
0 &  & g_{N}(s_{2})
\end{array}\right),H(s)=\left(\begin{array}{c}
h_{1}(s_{1})\\
\vdots\\
h_{N}(s_{N})
\end{array}\right)
\]
where
\[
g_{k}(s_{k})=\begin{cases}
g_{ik}+g_{sk} & s_{k}=1,(charging)\\
g_{sk} & s_{k}=0,(discharging)
\end{cases}
\]
and
\[
h_{k}(s_{k})=\begin{cases}
g_{ik} & s_{k}=1,(charging)\\
0 & s_{k}=0,(discharging)
\end{cases}
\]
with $g_{ik}$ and $g_{sk}$ being the internal conductance and the
series conductance of the $k^{th}$ oscillator respectively.

This can be written as:
\[
v'(t)=\left(C_{i}+C_{c}+C_{l}\right)^{-1}\left[-G(s)v(t)+H(s)\right]
\]
where voltages are normalized to $V_{DD}$. In rest of the text, the
state vector will be represented by $x(t)$ instead of $v(t)$.

\subsection{A symmetric system with identical oscillators}

Let us first consider a symmetric system, i.e. equal internal capacitances
($c_{i}$), coupling capacitances ($c_{c}$), internal conductances
$(g_{i})$ and series conductances $(g_{s})$. In such case, $\left(C_{i}+C_{c}+C_{l}\right)=(c_{i}I+c_{c}D-c_{c}A+C_{l})$
where $A$ is the adjacency matrix of the graph and $D$ is the diagonal
matrix of degrees of vertices. One simple choice of $C_{l}$ is $C_{l}=c_{c}(nI-D)$
which makes 
\begin{eqnarray*}
diag(C_{c}+C_{l}) & = & diag(c_{c}D-c_{c}A+c_{c}nI-c_{c}D)\\
 & = & diag(c_{c}nI)\\
 & = & c_{c}n\,diag(I)
\end{eqnarray*}

which is constant. Hence the coefficient matrix becomes 
\[
-G(s)(c_{i}I-c_{c}A+c_{c}nI)^{-1}=G(s)\left(c_{c}A-(c_{i}+c_{c}n)I\right)^{-1}
\]

Let us define $B=\left(c_{c}A-(c_{i}+c_{c}n)I\right)^{-1}$. Also
let $\hat{S}$ be a diagonal matrix where $diag(\hat{S})=s$. Then
$H(s)=g_{i}s$ and $G(s)=g_{s}I+g_{i}\hat{S}$ where $I$ is the identity
matrix. The system of \ref{eq:system} can then be written as:
\begin{equation}
v'(t)=B\left(g_{i}\hat{S}v+g_{s}\left(s-v\right)\right)\label{eq:system-identical}
\end{equation}

We note two important features about the charging transitions: (a)
charging processes are very fast compared to the period of oscillations
(figure \ref{fig:exp_sims_4}), which we also refer to as ``charging
spikes'' and (b) Charging of one oscillator has weak (but finite)
effect on the other oscillators. Hence, we study the dynamics of coupled
relaxation oscillator system in terms of two distinct interacting
systems - the linear dynamics in the discharging state $s=0$, and
the charging transitions.

\begin{figure}
\begin{centering}
\includegraphics{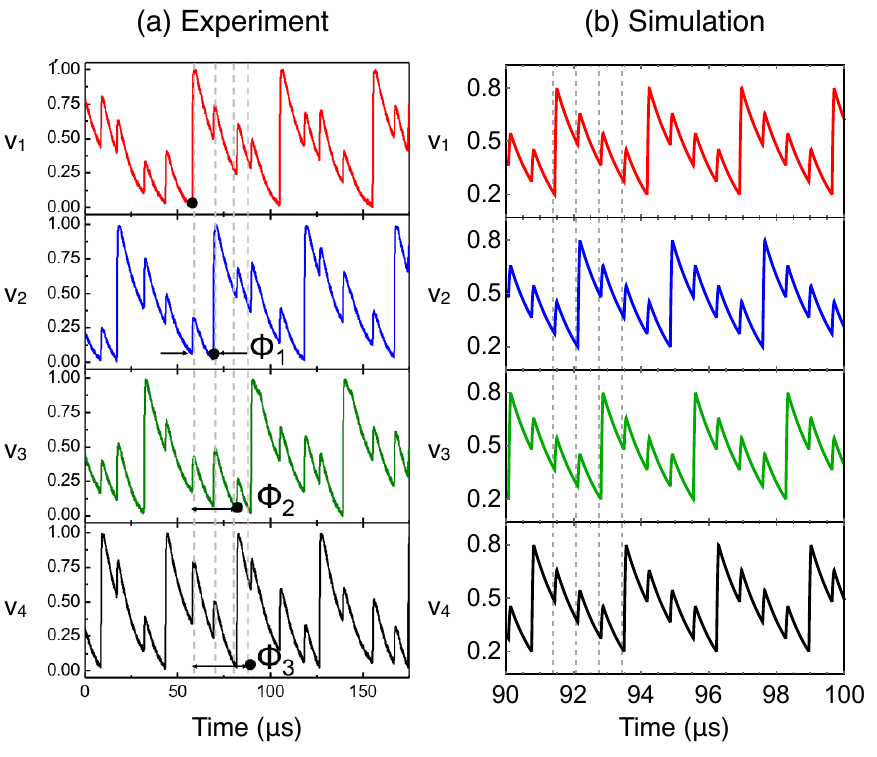}
\par\end{centering}
\caption{Experimental (a) and simulated (b) waveforms of a coupled relaxation
oscillator circuit connected in a complete graph with 4 nodes.\label{fig:exp_sims_4}}
\end{figure}

As the charging processes are very fast, the relative phases of oscillators
are same as the relative times of the charging spikes in the oscillator
waveforms. This gives a good way to visualize how the relative phases
of oscillators evolve with time. For all oscillators, we first note
all the time instants when the charging spikes start. The time differences
between consecutive charging spikes should settle to a constant value
if the oscillators settle, say $\Delta t_{i}$ for the $i^{th}$ oscillator.
If all the oscillators synchronize to a common frequency then $\Delta t_{i}=\Delta t_{0}$
for all $i$. Then at any $n^{th}$ charging spike which occur at
time instant $t_{n}$, we can calculate the relative phase of an oscillator
w.r.t. a hypothetical oscillator whose charging spikes occur at regular
intervals of $\Delta t_{i}$ from the start ($t=0$) as:

\[
\phi(n)=(t_{n}-n\Delta t_{i})\frac{2\pi}{\Delta t_{i}}\,\,(mod\,2\pi)
\]

When all $\Delta t_{i}$ are equal, i.e. the oscillators synchronize,
$\phi(n)$ calculates the relative phases w.r.t. a common $\Delta t_{0}$
for all oscillators. We plot $\phi(n)$ vs $n$ for all oscillators
in figure \ref{fig:phase-converge}. What we observe is that the phases
$\phi(n)$ converge and cluster together for dense graphs but as the
graphs become sparse, which are considered harder, the the phases
do not converge. In the intermediate region between dense and very
sparse graphs, the phase do converge but they do not cluster together
in groups. In these case our proposed algorithm and reformulation
of vertex coloring is particulalry useful because it does not rely
on the clustering of phases. Our algorithm does an $O(n^{2})$ post-processing
on the steady state order of phases and calculates a color assignment
which is always correct but can have non-optimal coloring, i.e. the
number of colors can be more than the chromatic number.

\begin{figure}
\begin{centering}
\includegraphics{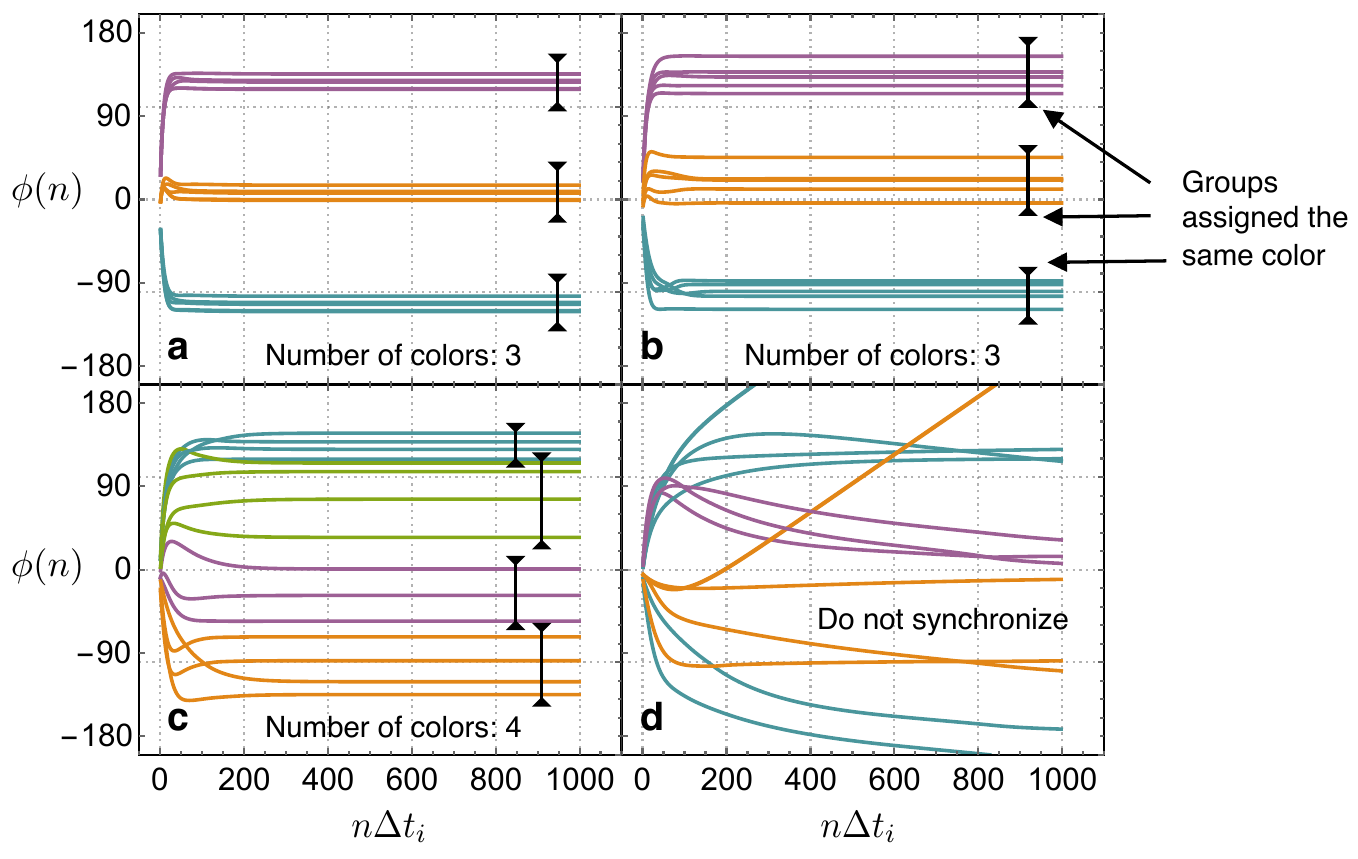}
\par\end{centering}
\caption{The phases $\phi(n)$ plotted against $n\Delta t_{i}$ for four relaxation
oscillator systems for solving 3-colorable graphs with the same color
partition $(5,5,5)$ but with different connectivities. Case (a) is
the case of a complete 3-partite graph, and graphs become sparser
from (a) to (d). The phase clustering degrades as graphs become sparser
and for very sparse graphs (d) the oscillators do not synchronize.
The number of colors detected using our algorithm is shown with each
graph and the nodes which are assigned the same color are indicated.
\label{fig:phase-converge}}
\end{figure}

\section{\label{sec:lds-discharge}Linear dynamics in the discharge phase
$s=0$}

In the state $s=0$ where all the oscillators are in the discharging
state, the system is an autonomous linear dynamical system
\[
x'(t)=-g_{s}\left(c_{i}I+c_{c}L+C_{l}\right)^{-1}x(t)
\]

Hence, the time evolution of this dynamical system is governed by
the spectral properties of the coefficient matrix. In an identical
system, the equation is
\begin{eqnarray*}
x'(t) & = & g_{s}\left(c_{c}A-(c_{i}+nc_{c})I\right)^{-1}x(t)\\
 & = & g_{s}Bv(t)
\end{eqnarray*}

Let the eigenvectors of $B$ be $\mu_{k}$. 
\begin{prop}
\label{prop:eigenvalues-B-A}The eigenvectors of the coefficient matrix
$B$ of the identical system are the same as those of the adjacency
matrix $A$. The eigenvalues $\mu_{k}$ of B are related to the eigenvalues
of $A$ as follows:
\[
\mu_{k}=\frac{1}{c_{c}\left(\lambda_{k}-\frac{c_{i}}{c_{c}}-n\right)}
\]

Moreover, $\mu_{k}<0$ for $1\leq k\leq n$.
\end{prop}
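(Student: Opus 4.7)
The plan is to exploit the fact that $B^{-1} = c_c A - (c_i + c_c n) I$ is an affine function of $A$, so any eigenvector of $A$ is automatically an eigenvector of $B^{-1}$, hence of $B$, with an eigenvalue that transforms algebraically.

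First I would take any eigenpair $(\lambda_k, \mu_k')$ of $A$, i.e.\ $A \mu_k' = \lambda_k \mu_k'$, and apply $B^{-1}$ to $\mu_k'$ directly:
\[
B^{-1} \mu_k' = \bigl(c_c A - (c_i + c_c n) I\bigr)\mu_k' = \bigl(c_c \lambda_k - c_i - c_c n\bigr)\mu_k'.
\]
To conclude that $\mu_k'$ is an eigenvector of $B$ with eigenvalue $1/(c_c\lambda_k - c_i - c_c n)$, I need to verify that the scalar on the right is nonzero so that $B^{-1}$ is invertible on this eigenspace; that verification coincides with the strict negativity claim below. Since $A$ is real symmetric, it has a full orthonormal eigenbasis, and the same basis then diagonalizes $B$, which gives the bijective correspondence of eigenvectors.

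For the sign claim, I would use the standard spectral bound for the adjacency matrix of a simple graph on $n$ vertices: every eigenvalue $\lambda_k$ satisfies $\lambda_k \leq \Delta(A) \leq n-1$, where $\Delta(A)$ is the maximum degree (bounded above by the Perron--Frobenius eigenvalue, which itself is at most $n-1$). Hence
\[
c_c \lambda_k - c_i - c_c n \;\leq\; c_c(n-1) - c_i - c_c n \;=\; -c_c - c_i \;<\; 0,
\]
since $c_c, c_i > 0$. This simultaneously shows that $B^{-1}$ is invertible (so $B$ is well-defined as stated) and that $\mu_k = 1/\bigl(c_c(\lambda_k - c_i/c_c - n)\bigr) < 0$ for every $k$.

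The only real content is the spectral bound $\lambda_k \leq n-1$; everything else is an algebraic manipulation. I do not expect any substantive obstacle, though I would be careful to note that this argument uses the symmetry of $A$ (inherited from the graph being undirected) to guarantee a full eigenbasis rather than merely an invariant-subspace correspondence.
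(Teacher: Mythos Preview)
Your proposal is correct and follows essentially the same argument as the paper: both observe that $B=(c_{c}A-(c_{i}+nc_{c})I)^{-1}$ shares the eigenvectors of $A$, compute the eigenvalue transformation algebraically, and then invoke the Perron--Frobenius / maximum-row-sum bound $\lambda_{\max}(A)\le n-1<n$ to force every $\mu_{k}$ negative. (Your parenthetical has the inequality between $\Delta(A)$ and the Perron--Frobenius eigenvalue stated backwards---it is the Perron eigenvalue that is bounded above by $\Delta(A)$, not vice versa---but the displayed chain $\lambda_{k}\le\Delta(A)\le n-1$ is itself correct, so the conclusion is unaffected.)
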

\begin{proof}
For any matrix $M$ with an eigenvalue $m$, the eigenvectors of $M+\alpha I$
and $\beta\left(M+\alpha I\right)^{-1}$ are same as $M$ for any
scalars $\alpha$ and $\beta$. This can be seen as follows: 
\begin{eqnarray*}
\left(M+\alpha I\right)x & = & Mx+\alpha x\\
 & = & \left(m+\alpha\right)x
\end{eqnarray*}

And eigenvectors remain unchanged for matrix inverse. Also eigenvalues
for $\beta\left(M+\alpha I\right)^{-1}$ will be $\beta/(m+a)$. Substituting
appropriate values for $\alpha$ and $\beta$ gives us the required
relation between $\mu_{k}$ and $\lambda_{k}$. 

Now, the Perron-Frobenius theory \citep{horn2012matrixanalysis}implies
that largest eigenvalue of $A$ is less than or equal to the maximum
row sum which is less than $n$, i.e.
\[
\lambda_{max}\leq r_{max}<n
\]
Hence, $\left(\lambda_{k}-\frac{c_{i}}{c_{c}}-n\right)<0$ for all
$k$ which implies that $\mu_{k}<0$ for all $k$. 
\end{proof}

\subsection{Asymptotic trajectories and asymptotic order of components of the
state vector in a linear dynamical system}

In a linear dynamical system with the state variable $x(t)$, the
order of components of $x(t)$ define a permutation at any time instant
$t$. In state $S=0$, the linear dynamical system is
\[
x'(t)=Bx(t)
\]

where $B$ is real, symmetric and the initial state of the system
$x(0)=x_{0}$.

\subsubsection{Geometry of permutation regions}

For any ordering $P$ of components $x_{i1}>x_{i2}>...>x_{in}$, the
region that corresponds to this ordering is given by
\begin{equation}
\mathcal{R_{P}}\left(P\right)=\bigcap_{m=i}^{n}\left(x_{im}>x_{i(m+1)}\right)\label{eq:permutation-regions}
\end{equation}

$\mathcal{R_{P}}(P)$ is a pair of n-dimensional simplexes with one
vertex as the origin and are mirror images of each other about the
origin. As such, any line that passes through the origin either passes
through both of them, or none.

\subsubsection{Asymptotic direction of trajectories}

In a linear dynamical system, the asymptotic order of components is
hence governed by the asympotic direction in which the system state
converges to.
\begin{prop}
In the linear dynamical system $x'(t)=Bx(t)$, where the coefficient
matrix $B$ is real, symmetric and full-rank, the system trajectory
always converges asymptotically to a particular direction. Moreover,
if the asymptotic direction is given by $d(x_{0},B)$ where $x(0)=x_{0}$,
then $d(x_{0},B)$ lies in the eigenspace of $B$ with the largest
eigenvalue (including the sign) almost everywhere, i.e. when the system
starts from anywhere except on a set of measure 0.\label{prop:asym-eigenspace}
\end{prop}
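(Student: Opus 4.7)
The plan is to exploit the spectral theorem: since $B$ is real symmetric, it admits an orthonormal basis of eigenvectors $\mu_{1},\ldots,\mu_{n}$ with real eigenvalues, which we order as $\lambda_{1}\geq\lambda_{2}\geq\cdots\geq\lambda_{n}$. The solution of the linear system can then be written in closed form as $x(t)=e^{Bt}x_{0}=\sum_{k=1}^{n}c_{k}e^{\lambda_{k}t}\mu_{k}$, where $c_{k}=\mu_{k}^{T}x_{0}$ are the coordinates of the initial condition in this orthonormal basis. This reduces the qualitative question of asymptotic direction to comparing the exponential growth rates $e^{\lambda_{k}t}$ as $t\to\infty$.

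First I would let $m$ denote the multiplicity of the largest eigenvalue $\lambda_{1}$, so that $\lambda_{1}=\cdots=\lambda_{m}>\lambda_{m+1}$, and factor out the dominant exponential:
\[
x(t)=e^{\lambda_{1}t}\left[\sum_{k=1}^{m}c_{k}\mu_{k}+\sum_{k=m+1}^{n}c_{k}e^{(\lambda_{k}-\lambda_{1})t}\mu_{k}\right].
\]
Since each exponent $\lambda_{k}-\lambda_{1}$ is strictly negative for $k>m$, the second sum decays exponentially to zero. Writing $u=\sum_{k=1}^{m}c_{k}\mu_{k}$, the scalar $e^{\lambda_{1}t}$ cancels upon normalization, and $x(t)/\|x(t)\|$ converges to $u/\|u\|$, which lies in the $\lambda_{1}$-eigenspace by construction. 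This establishes both the existence of a limiting direction $d(x_{0},B)$ and its location in the top eigenspace, conditional on $u\neq 0$.

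Finally, the almost-everywhere claim reduces to identifying the exceptional set $\{x_{0}:u=0\}$. By orthonormality, $u=0$ if and only if every $c_{k}=\mu_{k}^{T}x_{0}$ vanishes for $k\leq m$, i.e.\ $x_{0}$ is orthogonal to the entire $\lambda_{1}$-eigenspace. This orthogonal complement is a linear subspace of dimension $n-m<n$, and hence is a Lebesgue-null subset of the state space. The one subtlety, and where I would be most careful in writing out the proof, is the potential degeneracy of $\lambda_{1}$: when $m>1$ the limiting direction depends on $x_{0}$ through its projection onto the top eigenspace, so the conclusion has to be phrased as membership in that eigenspace rather than equality to a fixed eigenvector---which is exactly how the proposition is stated.
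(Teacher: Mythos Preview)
Your proposal is correct and follows essentially the same approach as the paper: diagonalize $B$ via the spectral theorem, factor out the dominant exponential $e^{\lambda_{1}t}$, and observe that the normalized trajectory converges to the projection of $x_{0}$ onto the top eigenspace, with the exceptional set being the orthogonal complement of that eigenspace (a proper subspace, hence Lebesgue-null). The paper phrases this using the Lyapunov exponent $\lambda(x_{0})$ and the projection $P_{E_{1}}=Q\,\mathrm{diag}(1,\ldots,1,0,\ldots,0)\,Q^{T}$, while you write the same computation in coordinates $c_{k}=\mu_{k}^{T}x_{0}$; the arguments are otherwise identical.
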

\begin{proof}
Let $x(t,x_{0})$ be the solution of the dynamical system when the
initial starting state $x(0)=x_{0}$. As the fixed point is 0, the
asymptotic direction $d(x_{0},B)$ to which the system state converges
can be written as
\begin{eqnarray*}
d(x_{0},B) & = & \lim_{t\rightarrow\infty}\frac{x(t)}{\left\Vert x(t)\right\Vert }\\
 & = & \lim_{t\rightarrow\infty}\frac{e^{Bt}x_{0}}{e^{\lambda(x_{0})t}}
\end{eqnarray*}
where $\lambda(x_{0})$ is the Lypunov exponent of the trajectory
starting from $x_{0}$. As $B$ is real and symmetric, all its eigenvalues
are real and the matrix is diagonalizable. Let $B=Q\Lambda Q^{T}$,
where $\Lambda$ is the diagonal matrix with of all eigenvalues. Then
\[
d(x_{0},B)=Q\left(\lim_{t\rightarrow\infty}\frac{e^{\Lambda t}x_{0}}{e^{\lambda(x_{0})t}}\right)Q^{T}x_{0}
\]

Let $\lambda_{1}>\lambda_{2}>...>\lambda_{l}$ be the $l$ distinct
eigenvalues of $B$, and let $E_{k},\,1\leq k\leq l$ be the corresponding
eigenspaces. Now, $\lambda(x_{0})=\lambda_{1}$ for $x_{0}\in\bigoplus_{k=1}^{l}E_{k}\backslash\bigoplus_{k=1}^{l-1}E_{k}$.
This means $\lambda(x_{0})=\lambda_{1}$ almost everywhere, i.e. everywhere
except on a set of measure $0$. Hence
\begin{eqnarray*}
d(x_{0},B) & = & Q\left(\begin{array}{ccccc}
1 & 0 & 0 & \cdots & 0\\
0 & 1 & 0\\
0 & 0 & \ddots\\
\vdots &  &  & 0\\
0 &  &  &  & \ddots
\end{array}\right)Q^{T}x_{0}\\
 & = & \left(q_{1a}q_{1a}^{T}+q_{1b}q_{1b}^{T}+...\right)x_{0}\\
 & = & P_{E_{1}}x_{0}
\end{eqnarray*}

Here, the diagonal elements of the middle matrix are ones only for
the rows corresponding to the eigenvector $\lambda_{1}$, and $q_{1a},q_{1b},...$
are orthogonal vectors that span $E_{1}$. Hence $d(x_{0},B)\in E_{1}$
almost everywhere. In case the largest eigenvalue $\lambda_{1}$ of
$B$ has multiplicity 1, $d(x_{0},B)$ is simply $q_{1}$ a.e. 
\end{proof}

\subsubsection{Asymptotic order of components}

The asymptotic order of components of $x(t)$ is determined by the
permutation region in which $d(x_{0},B)$ lie. Let $T(v)$ denote
the order of components of vector $v$, then $T(d(x_{0},B))=T(P_{E_{1}}x_{0})$
is the asymptotic order of components of $x(t)$. The asymptotic order
becomes a little more complex when $d(x_{0},B)$ lies at the boundary
of two or more permutation regions, i.e. some of the components of
$d(x_{0},B)$ are equal. In such cases, $T(d(x_{0},B))$ is only a
partial order as determined by $d(x_{0},B)$. $T(d(x_{0},B))$ can
be extended to a total order by the asymptotic direction of the system
in the remaining space $E_{2}\oplus E_{3}\oplus...\oplus E_{l}$.
Let us denote this by $d(x_{0}\backslash E_{1})$. Also, let $P_{E_{1}}$
be the projection matrix on $E_{1}$, then
\begin{eqnarray*}
d(x_{0},B\backslash E_{1}) & = & \lim_{t\rightarrow\infty}\frac{\left(I-P_{E_{1}}\right)x(t)}{\left\Vert \left(I-P_{E_{1}}\right)x(t)\right\Vert }
\end{eqnarray*}

Now, $d(x_{0},B\backslash E_{1})\perp d(x_{0},B)$. When $d(x_{0},B)$
is at the boundary of some permutation regions, the disambiguation
among these regions, i.e. ordering among the components which are
equal, is done by $d(x_{0},B\backslash E_{1})$ as it is perpendicular
to $d(x_{0},B)$. Hence, the asymptotic order is determined by both
$d(x_{0},B)$ and $d(x_{0},B\backslash E_{1})$. If $d(x_{0},B\backslash E_{1})$
lie at the boundary of some other permutation regions, then the argument
can be extended in a similar way and the asymptotic order of components
is determined by $d(x_{0},B)$, $d(x_{0},B\backslash E_{1})$ and
$d(x_{0},B\backslash E_{1}\oplus E_{2})$ together, and so on.

The extension of the partial order $T(d(x_{0},B))$ using $T(d(x_{0},B\backslash E_{1}))$
is similar to the ordinal sum $T(d(x_{0},B))\oplus T(d(x_{0},B\backslash E_{1}))$
but a preferential one, i.e. the orders determined by $T(d(x_{0},B))$
are preferred over those determined in $T(d(x_{0},B\backslash E_{1}))$.
Let us denote this operation by the binary operator $\oplus'$ which
acts on an ordered pair of two partial orders and gives another partial
or total order.

The range of $\left(I-P_{E_{1}}\right)$ is $E_{2}\oplus E_{3}\oplus...\oplus E_{l}$.
The dynamics that govern the time evolution of $\left(I-P_{E_{1}}\right)x(t)$
in the space $E_{2}\oplus E_{3}\oplus...\oplus E_{l}$ is simply determined
by the eigenvectors and eigenvalues corresponding to $E_{2},E_{3},...,E_{l}$.
Hence from \ref{prop:asym-eigenspace}, $d(x_{0},B\backslash E_{1})\in E_{2}$.
Specifically, 
\[
d(x_{0},B\backslash E_{1})=\left(q_{2a}q_{2a}^{T}+q_{2b}q_{2b}^{T}+...\right)x_{0}
\]
where $q_{2a},q_{2b},...$ are the eigenvectors corresponding to $\lambda_{2}$.
Extending the argument, we have $d(x_{0}\backslash E_{1}\oplus E_{2})\in E_{3}$
and so on. Hence, we have the following:
\begin{prop}
\label{prop:asym-order-discharge}The asymptotic order of components
of $x(t)$ in the linear dynamical system $x'(t)=Bx(t)$, where the
coefficient matrix $B$ is real, symmetric and full-rank, is determined
by $T(d(x_{0},B))$. In case $d(x_{0},B)$ lies on the boundary of
some permutation regions then $T(d(x_{0},B))$ is a partial order
which can be extended to a total order as $T(d(x_{0},B))\oplus'T(d(x_{0},B\backslash E_{1}))$.
And in case $d(x_{0},B\backslash E_{1})$ lies at some boundary then
the asymptotic order is determined as $T(d(x_{0},B))\oplus'T(d(x_{0},B\backslash E_{1}))\oplus'T(d(x_{0},B\backslash E_{1}\oplus E_{2}))$
. Moreover,
\begin{eqnarray*}
d(x_{0},B) & = & \left(q_{1a}q_{1a}^{T}+q_{1b}q_{1b}^{T}+...\right)x_{0}=P_{E_{1}}x_{0}\in E_{1}\\
d(x_{0},B\backslash E_{1}) & = & \left(q_{2a}q_{2a}^{T}+q_{2b}q_{2b}^{T}+...\right)x_{0}=P_{E_{2}}x_{0}\in E_{2}\\
d(x_{0},B\backslash E_{1}\oplus E_{2}) & = & \left(q_{3a}q_{3a}^{T}+q_{3b}q_{3b}^{T}+...\right)x_{0}=P_{E_{3}}x_{0}\in E_{3}
\end{eqnarray*}
and so on. Hence, the asymptotic order of components is determined
as 
\[
Q_{0}(x_{0})=T(P_{E_{1}}x_{0})\oplus'T(P_{E_{2}}x_{0})\oplus'T(P_{E_{3}}x_{0})\ldots
\]

\begin{figure}
\begin{centering}
\includegraphics{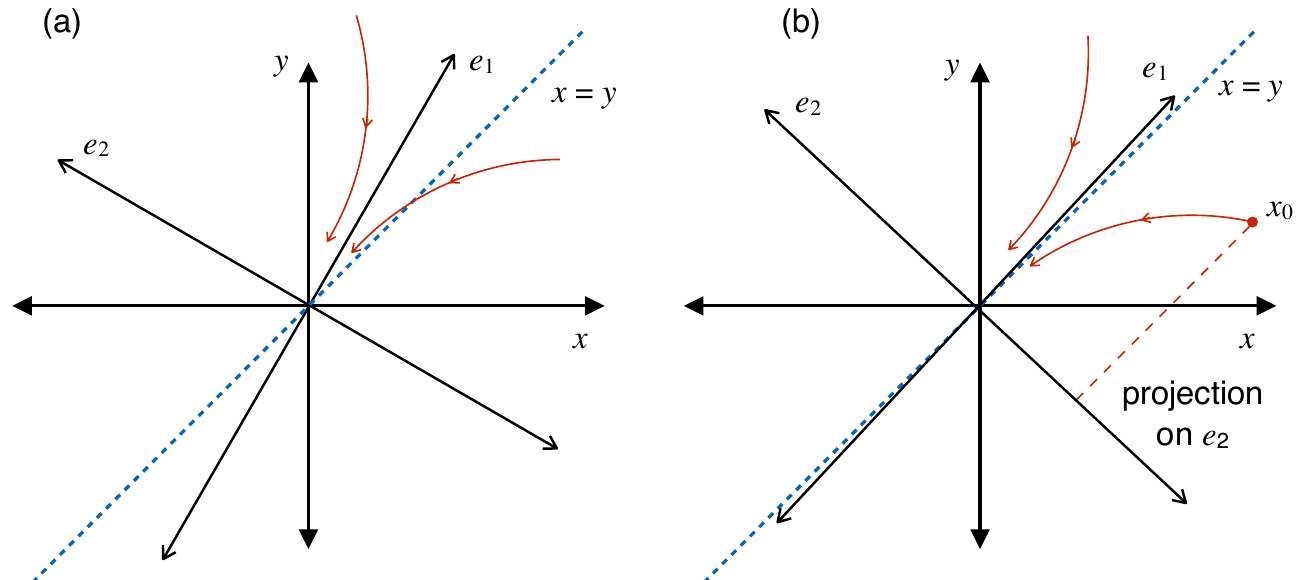}
\par\end{centering}
\caption{Representation of flows in a two dimensional linear dynamical system
where both eigenvalues are negative and $|\lambda_{2}|>|\lambda_{1}|$.
(a) The system trajectory approaches the direction of $e_{1}$ with
time and hence the order of components, i.e. the order of $x$ and
$y$ coordinates is determined by $e_{1}$. (b) When $e_{1}$ lies
close to the $x=y$ line, the order depends on which side $x_{0}$
lies w.r.t. $e_{1}$ which is given by the projection of $x_{0}$
on $e_{2}$.}
\end{figure}
\end{prop}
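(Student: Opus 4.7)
The plan is to leverage the spectral decomposition of $B$ to write the solution as an explicit sum of exponentially decaying (or growing) terms and then to compare components pairwise, using the fact that for large $t$ the sign of a difference is controlled by the leading nonzero term in that sum. Concretely, I would start by diagonalizing $B = Q\Lambda Q^T$ and writing
\[
x(t) = e^{Bt}x_0 = \sum_{k=1}^{l} e^{\lambda_k t} P_{E_k} x_0,
\]
where $P_{E_k}$ is the orthogonal projector onto the $k$-th eigenspace. The first claim, that $T(d(x_0,B)) = T(P_{E_1}x_0)$ governs the asymptotic order whenever $P_{E_1}x_0$ has all distinct components, then follows directly from Proposition \ref{prop:asym-eigenspace}: after factoring out $e^{\lambda_1 t}$, the leading term is $P_{E_1}x_0$ and the rest vanish in the limit, so for sufficiently large $t$ the strict inequalities $(P_{E_1}x_0)_i > (P_{E_1}x_0)_j$ lift to strict inequalities $x_i(t) > x_j(t)$.

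Next, I would handle ties. If $(P_{E_1}x_0)_i = (P_{E_1}x_0)_j$ for some pair $i,j$, then $x_i(t) - x_j(t)$ has no contribution from the $E_1$ term, and its asymptotic sign is governed by the next largest eigenvalue for which the corresponding projection distinguishes $i$ from $j$. Formally, I would write
\[
x_i(t) - x_j(t) = \sum_{k \geq 2} e^{\lambda_k t}\bigl[(P_{E_k}x_0)_i - (P_{E_k}x_0)_j\bigr],
\]
and observe that for large $t$ the sign is that of $(P_{E_m}x_0)_i - (P_{E_m}x_0)_j$ where $m$ is the smallest index $\geq 2$ with a nonzero coefficient. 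To interpret this in terms of $d(x_0,B\backslash E_1)$, I would apply Proposition \ref{prop:asym-eigenspace} to the restricted dynamics $(I-P_{E_1})x(t)$, whose evolution is governed by $B$ restricted to $E_2 \oplus \cdots \oplus E_l$, yielding $d(x_0,B\backslash E_1) = P_{E_2}x_0 \in E_2$. This proves the second displayed identity and shows that $T(P_{E_2}x_0)$ is precisely what resolves the ties left open by $T(P_{E_1}x_0)$, which is the content of the preferential ordinal sum operator $\oplus'$.

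The iteration is then essentially bookkeeping: any tie that persists (i.e. both $P_{E_1}x_0$ and $P_{E_2}x_0$ agree on indices $i,j$) is resolved by looking at $(I - P_{E_1} - P_{E_2})x(t)$, whose dynamics project onto $E_3 \oplus \cdots \oplus E_l$, giving $d(x_0, B \backslash E_1 \oplus E_2) = P_{E_3} x_0 \in E_3$, and so on. Since $B$ is full rank with finitely many eigenspaces spanning $\mathbb{R}^n$, this procedure terminates after at most $l$ steps, and because $\{x_0 : (P_{E_k}x_0)_i = (P_{E_k}x_0)_j \text{ for all } k\}$ is a proper linear subspace whenever $i \neq j$, the full chain yields a total order outside a measure-zero set. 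Assembling these observations gives the formula $Q_0(x_0) = T(P_{E_1}x_0) \oplus' T(P_{E_2}x_0) \oplus' \cdots$.

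The main obstacle I foresee is not the spectral expansion itself but carefully justifying the preferential ordinal sum: I need to show that the operator $\oplus'$ is well defined on partial orders coming from projections onto orthogonal eigenspaces, and that the sign-of-leading-term argument for $x_i(t) - x_j(t)$ is consistent across all pairs simultaneously (so that the resulting relation is actually a partial order and not merely a collection of pairwise comparisons that could be cyclic). This is handled by the observation that each $P_{E_k}x_0$ individually induces a bona fide partial order on the coordinates, and the preferential composition just refines coarser partial orders by finer ones in a transitive way, which can be checked directly from the definition of $\oplus'$.
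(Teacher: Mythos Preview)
Your proposal is correct and follows essentially the same approach as the paper: both arguments rest on the spectral decomposition $x(t)=\sum_k e^{\lambda_k t}P_{E_k}x_0$, invoke Proposition~\ref{prop:asym-eigenspace} to identify $d(x_0,B)=P_{E_1}x_0$, and then resolve ties by passing to the restricted dynamics on $E_2\oplus\cdots\oplus E_l$ and iterating. The paper phrases the tie-breaking geometrically (via the perpendicularity $d(x_0,B\backslash E_1)\perp d(x_0,B)$ and the observation that the restricted flow is governed by the remaining eigenspaces), whereas you make it analytic by looking at the leading nonzero term in $x_i(t)-x_j(t)$; these are two sides of the same coin, and your version is if anything slightly more explicit than the paper's discussion preceding the proposition.
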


\section{\label{sec:lds-charge}Linear dynamics in the charging states $s\protect\neq0$}

When $s\neq0$ the system is a linear dynamical system, but the fixed
point is not $0$. The identical system in a charging state $s$ can
be described as
\begin{eqnarray*}
v'(t) & = & B\left[G(s)v(t)-H(s)\right]\\
 & = & BG(s)\left(v(t)-G(s)^{-1}H(s)\right)
\end{eqnarray*}

The fixed point of the system in a state $s$ is 
\[
G(s)^{-1}H(s)=\frac{g_{i}}{g_{s}+g_{i}}s
\]
and the coefficient matrix for the linear flow is 
\[
\left(c_{c}A-(c_{i}+c_{c}n)I\right)^{-1}G(s)=BG(s)
\]
where $B=\left(c_{c}A-(c_{i}+c_{c}n)I\right)^{-1}$ as before (\secref{lds-discharge}).
When $g\gg g_{s}$, i.e. the chargings are much faster than the dischargings,
the fixed points of the system are close to $s$ which are the corners
of the unit cube in $n$ dimensions. Following the arguments as in
\secref{lds-discharge}, even in this case the system trajectory will
converge to an asymptotic direction. The asymptotic ordering of components
would depend on first the fixed point, and in case the fixed point
has equal components then it would also depend on the asymptotic direction
of trajectory. This is explained as:
\begin{prop}
\label{prop:asym-order-fp}In the linear dynamical system of the charging
states $x'(t)=BG(s)\left(x(t)-p\right)$, where $p=\frac{g}{g_{s}+g}s$
is the fixed point and the coefficient matrix $B$ is real, symmetric
and full-rank, the asymptotic permutation of the components will be
same as the permutation of components of the fixed points, i.e. $T(p)$.
In case the fixed point $p$ lies at (or close) to the boundary of
some permutation regions, i.e. some components of $p$ are equal,
the disambiguation of ordering among these components can be done
considering the linear dynamics of $x'(t)=Bx(t)$ with fixed point
shifted to $0$, and following Propositions \ref{prop:asym-order-discharge}.
Hence, the asymptotic order of components is given by
\begin{eqnarray*}
Q_{s}(x_{0}) & = & T(p)\oplus'T(P_{sE_{1}}x_{0})\oplus'T(P_{sE_{2}}x_{0})\oplus'\ldots\\
 & = & T(s)\oplus'T(P_{sE_{1}}x_{0})\oplus'T(P_{sE_{2}}x_{0})\oplus'\ldots
\end{eqnarray*}

where $P_{sE_{1}},P_{sE_{2}},\ldots$ are the projections on the eigenspaces
of $BG(s)$.
\end{prop}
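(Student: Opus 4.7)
The plan is to reduce the charging-state dynamics to the homogeneous case already treated in Proposition \ref{prop:asym-order-discharge} by shifting the origin. Setting $y(t) = x(t) - p$ gives $y'(t) = BG(s)\,y(t)$ with fixed point at the origin, and $x(t) = p + y(t)$. If $y(t) \to 0$ exponentially then the asymptotic order of components of $x(t)$ agrees, at leading order, with $T(p)$; because $p = \frac{g_i}{g_s + g_i}\,s$ is a nonnegative scalar multiple of $s$, we have $T(p) = T(s)$, which gives the first term of $Q_s(x_0)$.

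To confirm convergence I would verify that every eigenvalue of $BG(s)$ is strictly negative. The product of two symmetric matrices is symmetric only when they commute, but $G(s)$ is diagonal and strictly positive, so $G(s)^{1/2}$ is well defined and the change of variable $\tilde{y} = G(s)^{1/2}\,y$ conjugates $BG(s)$ to the symmetric matrix $G(s)^{1/2} B\, G(s)^{1/2}$. By Sylvester's law of inertia this symmetric conjugate inherits the signature of $B$, and Proposition \ref{prop:eigenvalues-B-A} guarantees that $B$ is negative definite. Hence $BG(s)$ has real, strictly negative eigenvalues, $y(t) \to 0$, and $x(t) \to p$.

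When several components of $p$ coincide (equivalently, when some entries of $s$ are equal), the leading-order ordering $T(p)$ is only partial and the ties must be broken by the direction from which $y(t)$ approaches $0$. Working in the $\tilde{y}$-coordinates, the homogeneous dynamics are governed by the symmetric matrix $G(s)^{1/2} B\, G(s)^{1/2}$, so Propositions \ref{prop:asym-eigenspace} and \ref{prop:asym-order-discharge} apply as written: almost every trajectory converges to a direction in the top eigenspace of this matrix, and successive ties are resolved by the projections onto the subsequent eigenspaces. Transporting back to $y$-coordinates via $G(s)^{-1/2}$, the top eigenspace becomes $E_{s1}$ (the eigenspace of $BG(s)$ for the largest eigenvalue), and the projections $P_{sE_1}, P_{sE_2}, \ldots$ appearing in the statement are the corresponding oblique projections in $y$-space.

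The main obstacle is precisely this loss of symmetry of $BG(s)$: the earlier propositions were stated for real symmetric coefficient matrices, and the argument relying on an orthogonal eigenspace decomposition does not apply directly. The similarity to a symmetric matrix is the clean way around this, but care is needed because the permutation regions live in the $y$-coordinates, not in $\tilde{y}$, so the ``projection onto $E_{sk}$'' in the statement must be read as the oblique projection that pulls back the orthogonal projection under $G(s)^{-1/2}$. Once this translation is made explicit, the chained ordering $T(s) \oplus' T(P_{sE_1} x_0) \oplus' T(P_{sE_2} x_0) \oplus' \ldots$ follows by combining the leading limit $x(t) \to p$ with the shifted homogeneous asymptotic-direction analysis.
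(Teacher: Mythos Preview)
Your approach matches the paper's intent: the paper does not give a separate proof for this proposition, and the reasoning it offers is essentially what you do in your first paragraph---shift the fixed point to the origin and invoke Proposition~\ref{prop:asym-order-discharge}. Where you go further is in confronting the fact that $BG(s)$ is not symmetric, which the paper simply glosses over (it even writes ``considering the linear dynamics of $x'(t)=Bx(t)$'' when the shifted system is actually governed by $BG(s)$). Your fix via the congruence $G(s)^{1/2}BG(s)^{1/2}$ and Sylvester's law of inertia is the standard and correct way to recover real negative spectrum and diagonalizability, and it genuinely fills a gap the paper leaves open.

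One simplification you may have missed: the tie-breaking only concerns components $i,j$ with $p_i=p_j$, which forces $s_i=s_j$ and hence $G(s)_{ii}=G(s)_{jj}$. So within any tied block the rescaling $\tilde y = G(s)^{1/2}y$ is a common scalar and does not alter relative order. This means your worry about oblique versus orthogonal projections, while correct in principle, is harmless for the ordering conclusion: the permutation information survives the pull-back intact on exactly the coordinates where it matters. A second minor point worth flagging (present in the paper's notation as well) is that after shifting, the projections should strictly be applied to $y_0 = x_0 - p$ rather than $x_0$; you inherit this looseness from the statement.
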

In case the matrix $B$ in the equation $x'(t)=BG(s)\left(x(t)-p\right)$
is not full rank, the system trajectory does not converge to the point
$p$. If $N$ is the null space of the matrix $B$ and $P_{N}$ is
the projection on the null space $N$, then the convergence limit
point for the trajectory starting from $x_{0}$ is is $p+P_{N}x_{0}$.
Also, $N$ is also the null space for $BG(s)$ for all $s$. Hence,
Proposition \ref{prop:asym-order-fp} can be modified for matrices
$B$ which are not full-rank as follows
\begin{prop}
In the linear dynamical system as described in Proposition \ref{prop:asym-order-fp},
but where $B$ is not full rank, the asymptotic order of components
is given by
\[
Q_{s}(x_{0})=T\left(\frac{g_{i}}{g_{i}+g_{s}}s+P_{sN}x_{0}\right)\oplus'T(P_{sE_{1}}x_{0})\oplus'T(P_{sE_{2}}x_{0})\oplus'\ldots
\]

where $P_{sN}$ is the projection matrix on the null space of $BG(s)$.
\end{prop}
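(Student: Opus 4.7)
The plan is to reduce the rank-deficient case to the full-rank setting of Proposition \ref{prop:asym-order-fp} by splitting the state space into the null space of $BG(s)$ and its complement, on which the restricted dynamics are effectively full-rank. First I would pass to the shifted coordinate $y(t) = x(t) - p$ with $p = \tfrac{g_{i}}{g_{i} + g_{s}}s$, which remains a fixed point (though non-unique) since $BG(s)p = BH(s)$ even when $B$ is singular. This gives the homogeneous flow $y'(t) = BG(s)y(t)$.

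The key structural observation is that, although $BG(s)$ fails to be symmetric in the Euclidean inner product when $G(s) \neq cI$, it is self-adjoint in the weighted inner product $\langle u, v\rangle_{G(s)} = u^{T}G(s)v$, since $G(s)$ is a positive diagonal and $B$ is symmetric. Equivalently, $BG(s)$ is similar to the symmetric matrix $G(s)^{1/2}BG(s)^{1/2}$. Consequently $BG(s)$ is diagonalizable with real eigenvalues and admits a $G(s)$-orthogonal decomposition $\mathbb{R}^{n} = N_{s} \oplus E_{s1} \oplus E_{s2} \oplus \cdots$, with $N_{s} = \ker BG(s)$ and $E_{sk}$ the eigenspaces of the distinct nonzero eigenvalues. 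This decomposition is invariant under the flow, so each piece can be analyzed separately.

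I would then decompose $y_{0} = P_{sN}y_{0} + \sum_{k}P_{sE_{k}}y_{0}$ via the associated $G(s)$-orthogonal projections. Because $BG(s)$ annihilates $N_{s}$, the null-space component is a constant of motion ($P_{sN}y(t) = P_{sN}y_{0}$), whereas the components in the nonzero eigenspaces decay exponentially; the nonzero eigenvalues are strictly negative, inheriting this property from Proposition \ref{prop:eigenvalues-B-A} via similarity to $G(s)^{1/2}BG(s)^{1/2}$. Passing to $t \to \infty$ gives $x(t) \to p + P_{sN}y_{0}$: the trajectory converges not to $p$ itself but to an affine shift of $p$ by the null-space component of the initial offset. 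This accounts for the leading term $T\!\left(\tfrac{g_{i}}{g_{i}+g_{s}}s + P_{sN}x_{0}\right)$ in the claimed ordinal sum, with the $P_{sN}p$ contribution absorbed into the notation under the mild hypothesis that $p$ lies in $\mathrm{range}(B)$. For the disambiguation I would replay the arguments of Propositions \ref{prop:asym-order-discharge} and \ref{prop:asym-order-fp} inside the invariant subspace $V_{s} = E_{s1} \oplus E_{s2} \oplus \cdots$, on which $BG(s)$ restricts to a full-rank operator: whenever the leading limit has coincident coordinates, the slowest-decaying surviving mode $P_{sE_{1}}x_{0}$ resolves the tie, and if it too has ties one descends to $P_{sE_{2}}x_{0}$, and so on, concatenating via $\oplus'$.

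The main obstacle I anticipate is the non-symmetry of $BG(s)$ in the standard Euclidean inner product, which formally prevents a direct appeal to Proposition \ref{prop:asym-order-fp}. The weighted-inner-product viewpoint resolves this, but one must be careful to interpret the projections $P_{sE_{k}}$ appearing in the statement as the (Euclidean-oblique) projections induced by the $G(s)$-orthogonal spectral decomposition of $BG(s)$ rather than as standard Euclidean projections. The only other nontrivial step is confirming the sign of the nonzero eigenvalues of $BG(s)$, which follows from the similarity argument above.
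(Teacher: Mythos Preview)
Your approach is essentially the paper's: the paper's justification is only the short paragraph preceding the proposition, which notes that when $B$ is singular the null space $N=\ker B$ coincides with $\ker BG(s)$ (since $G(s)$ is invertible), that the trajectory therefore converges to $p+P_{N}x_{0}$ rather than to $p$, and that the tie-breaking hierarchy of Proposition~\ref{prop:asym-order-fp} then applies on the remaining eigenspaces. Your write-up is more careful than the paper on a point it leaves entirely implicit---the non-symmetry of $BG(s)$ in the Euclidean inner product and the resulting need to read the spectral decomposition and the projections $P_{sE_{k}}$ in the $G(s)$-weighted inner product---and you also correctly flag the hidden assumption $P_{sN}p=0$ that the paper silently uses when writing $p+P_{N}x_{0}$ rather than $p+P_{N}(x_{0}-p)$.
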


When $x_{0}$ is close to the eigenspaces, i.e. magnitude of $P_{sN}x_{0}$
is very small, the additive term of $P_{N}x_{0}$ in the first term
does not change the order determined by $s$. Formally, when $\max\left\{ \left(P_{N}x_{0}\right)_{i}\right\} <\frac{g_{s}}{g_{i}+g_{s}}$
\[
T\left(\frac{g_{i}}{g_{i}+g_{s}}s+P_{N}x_{0}\right)=T(s)\oplus'T(P_{N}x_{0})
\]

and hence,
\begin{equation}
Q_{s}(x_{0})=T\left(s\right)\oplus'T(P_{sN}x_{0})\oplus'T(P_{sE_{1}}x_{0})\oplus'T(P_{sE_{2}}x_{0})\oplus'\ldots\label{eq:Qs-condition}
\end{equation}

\begin{figure}
\begin{centering}
\includegraphics{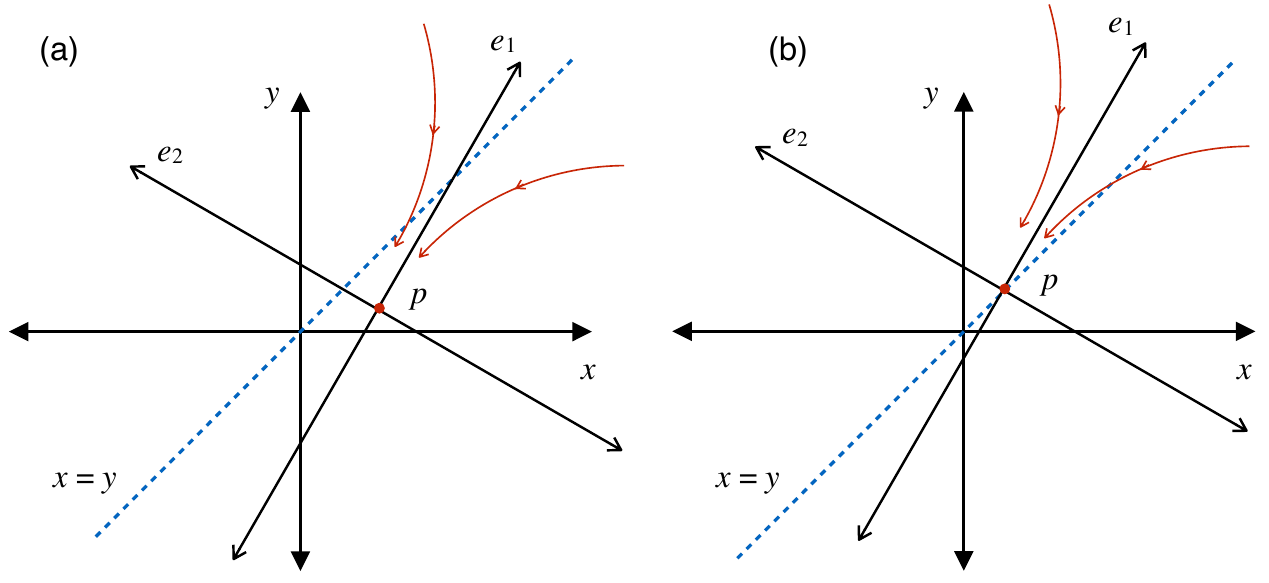}
\par\end{centering}
\caption{(a) When the fixed point in a two dimensional linear dynamical system
is not $0$ then the asymptotic order of the components is determined
by the fixed point $p$. (b) If the fixed point lies on the $x=y$
line, which is a boundary of permutations regions, then the disambiguation
is done using the eigenvectors.}
\end{figure}

\subsection{Approximation by instantaneous charging}

If the chargings are very fast, i.e. $\frac{g_{s}}{g_{i}}\rightarrow0$,
we can approximate the chargings by an instantaneous change in the
state from $x$ to $x+\Delta x$ by linearizing the system at the
time instant when the state changes from $s=0$ to the charging state.
Let $\hat{S}$ denote a diagonal matrix such that $diag(\hat{S})=s$
where $s$ is the state vector. When $s\neq0$ we have from (\ref{eq:system-identical})
\begin{eqnarray*}
x'(t) & = & B\left(g_{i}\hat{S}x+g_{s}\left(s-x\right)\right)\\
 & = & Bg_{i}\left(\hat{S}x+\frac{g_{s}}{g_{i}}(s-x)\right)\\
 & \simeq & g_{i}B\hat{S}x
\end{eqnarray*}

If the $k^{th}$ node charges then $\hat{S}x=v_{l}e_{k}$ where $e_{k}$
is $k-axis$ vector whose all components are $0$ expect the $k^{th}$
which is $1$. If the $k^{th}$ node charges completely from $v_{l}$
to $v_{h}$ without any state transition in between, we have 
\begin{eqnarray*}
\left(\Delta x\right)_{k} & = & dv\\
\implies(x')_{k}\Delta t & = & dv\\
\implies\Delta t & = & \frac{dv}{(g_{i}B\hat{S}x)_{k}}\\
 & = & \frac{dv}{g_{i}v_{l}e_{k}^{T}Be_{k}}\\
 & = & \frac{dv}{g_{i}v_{l}B_{kk}}
\end{eqnarray*}

Therefore,
\begin{eqnarray*}
\Delta x & = & x'\Delta t\\
 & = & dv\frac{v_{l}g_{i}Be_{k}}{v_{l}g_{i}B_{kk}}\\
 & = & \frac{dv}{B_{kk}}Be_{k}
\end{eqnarray*}

which is just a scaled column vector of $B$. We have the following:
\begin{prop}
\label{prop:instantaneous-charging}In the dynamical system of (\ref{eq:system-identical}),
when $s\neq0$ and only a single node charges, the chargings can be
approximated by linearizing the system. If the transition occurs from
$x$ to $x+\Delta x$ then $\Delta x$ is given by:
\[
\Delta x=\frac{dv}{B_{kk}}Be_{k}
\]
\end{prop}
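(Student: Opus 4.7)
The plan is to read off $\Delta x$ by freezing the coefficient matrix at its value at the onset of the charging event and then integrating the resulting constant vector field for exactly the time needed for component $k$ to traverse from $v_l$ to $v_h$. The preceding discussion has already carried out most of this computation in outline, so the proof amounts to assembling the steps carefully and justifying the linearization.

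First I would start from the system equation (\ref{eq:system-identical}) in a charging state $s\neq 0$ and apply the hypothesis $g_s/g_i\to 0$ to drop the term $\frac{g_s}{g_i}(s-x)$, yielding $x'(t)\simeq g_i B\hat{S}x$. Next, under the assumption that exactly one node, call it $k$, is in the charging state, $\hat{S}$ has a single nonzero diagonal entry (at position $k$), so $\hat{S}x = x_k e_k$. At the moment charging begins, the $k$th oscillator has just crossed its lower threshold, so $x_k=v_l$ and therefore $\hat{S}x = v_l e_k$. Substituting gives
\[
x'(t)\;\simeq\; g_i v_l\, B e_k,
\]
a constant vector in this linearization.

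Then I would determine $\Delta t$ by imposing that the $k$th component increases by the full charging swing $dv=v_h-v_l$: extracting the $k$th coordinate of the constant derivative gives $(x')_k = g_i v_l (Be_k)_k = g_i v_l B_{kk}$, so
\[
\Delta t \;=\; \frac{dv}{g_i v_l B_{kk}}.
\]
Multiplying the constant derivative by $\Delta t$ produces
\[
\Delta x \;=\; g_i v_l\,Be_k\cdot\frac{dv}{g_i v_l B_{kk}} \;=\; \frac{dv}{B_{kk}}\,Be_k,
\]
which is the claimed expression.

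The main subtlety, rather than any algebraic obstacle, is justifying that the vector field can be treated as constant over the interval $[t,t+\Delta t]$. This relies on two inputs: the fast-charging hypothesis $g_s/g_i\to 0$, which makes $\Delta t$ small so the slow discharging of the other components contributes only a higher-order correction to $\hat{S}x$; and the single-node assumption, which guarantees that no other state transition occurs in the interval so $\hat{S}$ itself does not change. Both conditions feed into the approximation $x'\simeq g_i B\hat{S}x\big|_{t}$ used above, and once they are in force the rest is the short computation already sketched.
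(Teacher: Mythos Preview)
Your proposal is correct and follows essentially the same approach as the paper: drop the $g_s$ term via $g_s/g_i\to 0$, use the single-charging-node hypothesis to reduce $\hat{S}x$ to $v_l e_k$, solve for $\Delta t$ from the $k$th component, and multiply through. Your additional paragraph justifying the constant-field approximation is more explicit than the paper's treatment but not materially different.
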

\begin{rem}
\label{rem:independent-delta-x}An important point to note here is
that this change is independent of $x$.
\end{rem}

\section{\label{sec:color-sorting}Vertex Color-Sorting}

As can be seen in the system equation of the capacitively coupled
oscillators, the discharge phase (where all oscillators are discharging)
is a simple linear differential equation with $H(s)=0$. The matrix
$C-C_{C}$ is just the Laplacian matrix of the graph of the oscillators
and the system dynamics is governed by simply the eigenspectrum of
the of the Laplacian matrix of the graph. As such, there are interesting
connections between spectral algorithms for graph coloring and the
coupled relaxation oscillator circuit.
\begin{defn}
(k-Color-Sorting) An ordering $u=\{u_{i}\}$, $i\in[1,n]$ of the
$n$ nodes of a graph is a proper k-Color-Sorting if there exists
a proper k-Coloring $\{c_{i}\}$, $i\in[1,n]$, where $c_{i}$ is
the color assigned to the $i^{th}$ node such that all nodes with
the same color appear together in $u$, i.e. for any nodes $i,j,k$
with $u_{i}<u_{k}<u_{j}$, $c_{i}=c_{j}\implies c_{i}=c_{k}=c_{j}$.
This can be extended to a cyclic ordering where the nodes with the
same color appear together.
\end{defn}
\begin{lem}
For a graph with $n$ nodes, adjacency matrix $A$ and chromatic number
$\chi_{A}$:
\end{lem}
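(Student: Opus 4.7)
The plan is to argue that a proper $k$-Color-Sorting of the graph exists if and only if $k \geq \chi_A$, which is the natural equivalence the section heading ``Vertex Color-Sorting'' is set up to establish. This recasts the NP-hard coloring problem as a sorting problem on the permutation of nodes, matching the fact that the oscillator dynamics of \secref{lds-discharge} naturally produce an asymptotic ordering $Q_0(x_0)$ rather than a partition.

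First I would dispose of the easy direction. By the definition of a proper $k$-Color-Sorting, its very existence is witnessed by the existence of a proper $k$-Coloring $\{c_i\}$. Therefore if any proper $k$-Color-Sorting exists, then $\chi_A \leq k$. In particular no proper $k$-Color-Sorting can exist for $k < \chi_A$.

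Second, for the nontrivial direction, I would construct a proper $\chi_A$-Color-Sorting explicitly from any proper $\chi_A$-Coloring. Fix a proper coloring $\{c_i\}$ using colors $1,2,\ldots,\chi_A$, and define the ordering $u$ by first listing all vertices of color $1$, then all vertices of color $2$, and so on through color $\chi_A$ (the intra-class order is arbitrary). By construction, for each color $c$, the set of positions occupied by color-$c$ vertices forms a contiguous block in $u$, so the color-sorting property $u_i < u_k < u_j$ and $c_i = c_j \Rightarrow c_k = c_i$ holds. Since $\{c_i\}$ is a proper coloring, each contiguous block is an independent set, so $u$ together with $\{c_i\}$ witnesses a proper $\chi_A$-Color-Sorting. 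The cyclic variant follows from the same construction, since concatenated blocks remain contiguous when the ordering is closed into a cycle.

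There is no real obstacle: the lemma is essentially structural, establishing that sorting and coloring have the same optimum $\chi_A$. The only point that deserves care is the direction of implication hidden in the definition --- one must note that a color-sorting carries a coloring with it, rather than merely a partition of indices into blocks, so that the inequality $\chi_A \leq k$ is immediate from the definition rather than requiring an additional argument about the induced subgraph on each block. The content of the section will then be to leverage this reduction by extracting a proper coloring from the asymptotic permutation $Q_0(x_0)$ via the $O(n^2)$ post-processing mentioned after \ref{fig:phase-converge}, but that is separate from proving the existence lemma itself.
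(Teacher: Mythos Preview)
Your proposal proves a different statement from the paper's lemma. The lemma's lead-in is followed by two enumerated claims that you did not have access to (or overlooked):
\begin{enumerate}
\item \emph{Every} ordering $S$ of the nodes is a proper $k$-Color-Sorting for \emph{some} $k$ with $\chi_A \leq k \leq n$.
\item With $B(M)$ defined as the minimum number of all-zero diagonal blocks covering the diagonal of $M$, the minimum such $k$ for a given ordering $S$ (with permutation matrix $P$) is $B(PAP^T)$, and $\chi_A \leq B(PAP^T) \leq k$.
\end{enumerate}
The paper's proof of part 1 is simply that any ordering is trivially a proper $n$-Color-Sorting (give each vertex its own color), and that no $k < \chi_A$ is possible since a $k$-Color-Sorting carries a proper $k$-coloring. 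For part 2, the paper observes that $PAP^T$ is the adjacency matrix under the reordered labels, that a $k$-Color-Sorting forces at least $k$ zero diagonal blocks (so $B(PAP^T) \leq k$), and that any covering by zero diagonal blocks itself defines a valid coloring (so $B(PAP^T) \geq \chi_A$).

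What you proved instead is the existence of \emph{one} ordering achieving $k = \chi_A$, i.e.\ that a proper $k$-Color-Sorting exists iff $k \geq \chi_A$. That is correct and related, but it is the content of the \emph{next} proposition in the paper (the reformulation of coloring as $\min_P B(PAP^T)$), not of this lemma. The lemma is structural in the opposite direction: it assigns to every permutation a well-defined minimal color count $B(PAP^T)$, which is what justifies the $O(n^2)$ post-processing of an arbitrary steady-state ordering. Your argument does not address arbitrary orderings, nor does it introduce or bound $B(PAP^T)$, so both parts of the lemma remain unproved.
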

\begin{enumerate}
\item Any ordering of nodes $S$ is a proper k-Color-Sorting for some $k$
such that $\chi_{A}\leq k\leq n$. 
\item Let $B(M)$ be the minimum number of diagonal blocks which are identically
$'0'$ and which cover the complete diagonal of the matrix $M$. The
minimum $k$ for which $S$ is a proper k-Color-Sorting is $B(PAP^{T})$.
If $S$ is a proper k-Color-Sorting and $P$ its permutation matrix,
then 
\[
\chi_{A}\leq B(PAP^{T})\leq k
\]
\end{enumerate}
\begin{proof}
Any ordering $S$ is a proper n-Color-Sorting, and if $S$ is a proper
k color sorting then minimum number of colors can be $\chi_{A}$. 

If $P$ is the permutation matrix of an ordering $u$, then $PAP^{T}$
is the adjacency matrix of a graph with the ordering of nodes changed
to $u$. If $u$ is a proper k-Color-Sorting then, $PAP^{T}$ will
have at least $k$ number of $'0'$ diagonal blocks, one corresponding
to each color group, hence, $B(PAP^{T})\leq k$. Also, the diagonal
blocks which are $'0'$ also determine a valid coloring of the graph
and hence $B(PAP^{T})\geq\chi_{A}$. 
\end{proof}
\begin{prop}
For a k-chromatic graph, k-Color-Sorting is NP hard. Moreover, finding
the chromatic number $\chi_{A}$ of a graph with adjacency matrix
$A$ and the proper $\chi_{A}$-Coloring is equivalent to the following
optimization problem:
\[
\begin{array}{cc}
min\,B(PAP^{T}), & P\in all\,permutations\,of\,nodes\end{array}
\]
where the solution P is a proper $\chi_{A}$-Color-Sorting, $\chi_{A}=min\{B(PAP^{T})\}$.
\end{prop}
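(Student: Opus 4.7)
The plan is to address the equivalence with the optimization problem first, then deduce the NP-hardness claim as an easy consequence, since the whole content of the proposition is really the correspondence between permutations with small $B(PAP^{T})$ and proper colorings.

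For the equivalence $\chi_{A}=\min_{P}B(PAP^{T})$, I would establish two inequalities. The direction $\chi_{A}\leq\min_{P}B(PAP^{T})$ is already contained in part 2 of the preceding lemma: for any permutation $P$, the zero diagonal blocks of $PAP^{T}$ determine independent sets that form a valid coloring, so $B(PAP^{T})\geq\chi_{A}$. For the reverse direction, I would take any proper $\chi_{A}$-coloring $\{c_{i}\}$ and exhibit a permutation that achieves the bound: let $P$ be the permutation that lists all nodes of color $1$ first, then all nodes of color $2$, and so on up to color $\chi_{A}$. Because vertices sharing a color are pairwise non-adjacent, each of the $\chi_{A}$ resulting diagonal blocks of $PAP^{T}$ is identically zero, so $B(PAP^{T})\leq\chi_{A}$. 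Combining these shows $\min_{P}B(PAP^{T})=\chi_{A}$, and any minimizer $P$ is automatically a proper $\chi_{A}$-Color-Sorting, whose zero diagonal blocks read off a proper $\chi_{A}$-Coloring.

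For the NP-hardness assertion, I would set up a polynomial reduction in both directions between $k$-Coloring and $k$-Color-Sorting on $n$-vertex graphs. The construction above already shows that a proper $k$-Coloring yields a proper $k$-Color-Sorting (order by color class). Conversely, given a proper $k$-Color-Sorting with permutation $P$, compute $PAP^{T}$, find a cover of its diagonal by at most $k$ zero blocks, and assign one color per block; this is a proper $k$-Coloring by definition of the block structure. Both directions run in polynomial time, so $k$-Color-Sorting is polynomial-time equivalent to $k$-Coloring, which is NP-hard for $k\geq 3$.

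The only step requiring any care is the reverse inequality in the equivalence, because one must verify that grouping nodes by color class truly produces a block-diagonal zero pattern of exactly $\chi_{A}$ blocks and no fewer; this is immediate from the defining property of a proper coloring (no edge inside a color class), but it is the step that ties the combinatorial notion of coloring to the matrix-theoretic quantity $B(\cdot)$. Everything else, including the NP-hardness claim, is essentially bookkeeping on top of this correspondence.
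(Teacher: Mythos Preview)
Your proposal is correct and follows essentially the same approach as the paper: both establish the equivalence $\chi_{A}=\min_{P}B(PAP^{T})$ via the two inequalities (one from the preceding lemma, the other by ordering nodes according to an optimal coloring), and both derive NP-hardness by observing that a proper $k$-Color-Sorting yields a proper $k$-coloring in polynomial time via the zero diagonal blocks. Your write-up is somewhat more explicit---you spell out the construction of the permutation from a coloring and phrase the hardness argument as a two-way polynomial reduction---whereas the paper compresses these into a couple of sentences, but the underlying ideas are identical.
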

\begin{proof}
Computing $B(PAP^{T})$ is a $O(n^{2})$ problem, n being the number
of nodes because there are $n^{2}$ elements in $PAP^{T}$. And for
a k-chromatic graph, $\chi_{A}=B(PAP^{T})=k$ where $P$ is a proper
k-Color-Sorting. Hence, $\chi_{A}$ can be computed in $O(n^{2})$
if a proper k-Color-Sorting $P$ can be found. 

Also, for any permutation $P$, $B(PAP^{T})\geq\chi_{A}$ as stated
above, where equality holds only when $P$ is a proper $\chi_{A}$-Color-Sorting.
Hence, finding chromatic number is equivalent to the stated optimization
problem. Also, once a proper $\chi_{A}$-Color-Sorting is known, the
$'0'$ diagonal blocks also determine the proper $\chi_{A}$-Coloring.
\end{proof}

\section{Cycles in the prototypical case: complete graphs with equal nodes
in each class}

Using the results in the previous sections, we can understand why
a cycle would exist in the prototypical case of a complete graph when
the number of nodes in each class is equal. 
\begin{prop}
\label{prop:cycle-conditions}The following three conditions when
satisfied result in the existence of a cycle and helps us understand
why the possibility of it reduces as graphs become sparser, and hence
harder.

\begin{enumerate}
\item \textbf{Attractor:} The system in state $s=0$ tries to order the
components of the state vector in the correct vertex color-sorting.
Hence, if the system starts from a state $x_{0}$ whose order of components
is same as the final asymptotic order, i.e. $T(x_{0})=Q_{0}(x_{0})$,
then with time $T(x(t))$ remains constant.
\item \textbf{Ordering: }The charging spikes just change the order of components
of $x$ by a circular permutation. If the $k^{th}$ oscillator charges
from $v_{l}$ to $v_{h}$ then the order of all other components remains
same.
\item \textbf{Sustaining the cycle:} If condition 2 is true then the charging
transitions cycle the order of $x_{0}$ to all the circular permutations.
For a cycle to exist, the state $s=0$ should not only preserve the
order of $x_{0}$ when $T(x_{0})=Q_{0}(x_{0})$ but it should also
have lower tendency to change the order when $T(x_{0})$ is any circular
permutation of $Q_{0}(x_{0})$. 
\end{enumerate}
\end{prop}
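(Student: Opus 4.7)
The plan is to verify each of the three conditions directly for the prototypical graph $K_{m,\ldots,m}$ on $n=pm$ vertices, relying on Propositions~\ref{prop:eigenvalues-B-A} and~\ref{prop:asym-order-discharge} for the discharge dynamics and on Proposition~\ref{prop:instantaneous-charging} for the charging jumps. Once all three hold, the periodic orbit is immediate: Condition~2 rotates the ordering by one class per firing, Condition~3 certifies that each rotated ordering is still a valid color-sorting, and Condition~1 keeps it frozen between firings, so after $p$ class-firings the system returns to its starting order.

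For Condition~1 (attractor), I would first record the spectrum of the adjacency matrix of $K_{m,\ldots,m}$: eigenvalues $(p-1)m$ (simple), $-m$ (multiplicity $p-1$), and $0$ (multiplicity $p(m-1)$). By Proposition~\ref{prop:eigenvalues-B-A} these give three negative eigenvalues of $B$, the least negative of which comes from $\lambda=-m$. Hence $E_1$ is the $(p-1)$-dimensional eigenspace of class-constant vectors summing to zero across classes, $E_2$ is the kernel of $A$ (vectors summing to zero within each class), and $E_3$ is the span of the all-ones vector. Proposition~\ref{prop:asym-order-discharge} then gives $Q_0(x_0)=T(P_{E_1}x_0)\oplus' T(P_{E_2}x_0)$, which groups components by partition class and breaks within-class ties via $P_{E_2}x_0$---precisely a valid color-sorting. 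To argue $T(x(t))\equiv T(x_0)$ whenever $T(x_0)=Q_0(x_0)$, I would write for any pair of indices $i,j$
\[
(x(t))_i - (x(t))_j \;=\; e^{\mu_1 t}\,\delta_1 \;+\; e^{\mu_2 t}\,\delta_2,
\]
with $\delta_k=(P_{E_k}x_0)_i-(P_{E_k}x_0)_j$ (the $E_3$ contribution cancels because $P_{E_3}x_0$ is a constant vector). For $i,j$ in the same class $\delta_1=0$ and the sign is trivially preserved; for $i,j$ in different classes $\delta_1\neq 0$, and since $\mu_1>\mu_2$ the function $g(t)=e^{(\mu_1-\mu_2)t}\delta_1+\delta_2$ is monotone in $t$, so if $g(0)=\delta_1+\delta_2$ agrees in sign with $\delta_1$---which is exactly the hypothesis $T(x_0)=Q_0(x_0)$---then the difference retains its sign for every $t\ge 0$.

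For Condition~2 (ordering), I would invoke Proposition~\ref{prop:instantaneous-charging}: a firing of node $k$ produces the jump $\Delta x = (dv/B_{kk})\,Be_k$. Because both $A$ and $B$ commute with the automorphism group of $K_{m,\ldots,m}$, the column $Be_k$ takes only three distinct values---$B_{kk}$ on the diagonal, one value on the remaining nodes of $k$'s class, and another on every node outside $k$'s class. Non-$k$ components therefore receive only two uniform increments, constant within each of the two groups, so the relative order of any two non-$k$ entries is preserved. Meanwhile $x_k$ itself moves from $v_l$ (the minimum) to $v_h$ (the maximum), giving a cyclic rotation of the ordering by one position; when a full class fires together (which the symmetry of the steady state forces), the ordering rotates by a whole class.

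For Condition~3 (sustaining the cycle), I would combine the previous two steps with the graph symmetry. The cyclic shift of the class labels is a graph automorphism $\pi$ of $K_{m,\ldots,m}$, and since $B$ commutes with $\pi$ the shifted state satisfies $Q_0(\pi x)=\pi Q_0(x)$. Consequently, if $T(x_0)=Q_0(x_0)$ holds before a firing, the rotated ordering after the firing coincides with $Q_0$ evaluated at the new initial state for the next discharge interval, and Condition~1 applied in the shifted frame preserves it until the next firing, closing the orbit after $p$ class-firings. The hard part will be this last step: the charging jump of Condition~2 respects the coarse class order from $E_1$ but \emph{not} the finer within-class order from $E_2$, so one must argue---either by invoking the symmetry of the steady-state initial condition or by quantitatively comparing the $E_2$-decay rate $|\mu_2|$ against the firing period---that no within-class component can drift across a between-class boundary before the next firing. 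This same quantitative gap is also what explains the degradation for sparser graphs: the clean three-eigenvalue spectrum of $K_{m,\ldots,m}$ is replaced by a generic spectrum without a class-respecting attractor, the column structure of $B$ produces many distinct increments in Condition~2, and the symmetry argument underpinning Condition~3 collapses.
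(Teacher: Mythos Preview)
Your plan matches the paper's own explanations essentially step for step: Condition~1 via the spectrum of $A$ and Propositions~\ref{prop:eigenvalues-B-A}--\ref{prop:asym-order-discharge}, Condition~2 via Proposition~\ref{prop:instantaneous-charging} and the column structure of $B$, and Condition~3 via the multiplicity of the least negative eigenvalue of $B$. Where you differ is in style rather than substance. For Condition~1 you give a sharper argument than the paper does: the paper simply observes that all eigenvalues of $B$ are negative and concludes that $T(x(t))$ is frozen, whereas your pairwise decomposition $(x(t))_i-(x(t))_j=e^{\mu_1 t}\delta_1+e^{\mu_2 t}\delta_2$ together with the monotonicity of $e^{(\mu_1-\mu_2)t}\delta_1+\delta_2$ actually justifies that conclusion. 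For Condition~2 you invoke the automorphism group of $K_{m,\ldots,m}$ to get the three-valued column of $B$, while the paper computes $B=F^{-1}$ explicitly (Appendices~\ref{subsec:B-structure}--\ref{subsec:B-column}); the conclusions coincide. For Condition~3 your use of the cyclic class-shift $\pi$ commuting with $B$ is exactly the formal counterpart of the paper's remark that any class-permuted state remains close to $E_1$ because of the $(p-1)$-fold multiplicity of the least negative eigenvalue.

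One genuine slip to fix in Condition~2: from ``only two uniform increments on the non-$k$ entries'' it does \emph{not} follow that the relative order of any two non-$k$ entries is preserved. An entry in $k$'s class receives $B_{kl}$ and an entry outside receives $B_{kj}$, and since $B_{kl}\neq B_{kj}$ a cross-group pair can in principle swap. The paper handles this not by a symmetry argument but by the quantitative estimate $(B_{kj}-B_{kl})/B_{kk}=\bigl(r+n+m+(n-m)/(r+m)\bigr)^{-1}$ with $r=c_i/c_c$, which is small under weak coupling or large $n+m$; you should import that bound (or an equivalent one) rather than claim automatic preservation. This is the same ``hard part'' you already flag in Condition~3, and it is the only place where your outline needs tightening.
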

Why these conditions hold in the prototypical case of complete graph
with equal number of nodes in each color class can be seen as follows. 

\textit{Explanation for condition 1: }The adjacency matrix $A$ in
the prototypical case is a low rank matrix with the rank equal to
the number of colors, i.e. if it is a $k$-partite graph then rank
is n. The adjacency matrix is a block matrix with equal sized $k^{2}$
blocks and the diagonal blocks are $0$ and the non-diagonal blocks
are $1$. One eigenvector of the matrix $A$ is the constant vector
$[1,1,1,...]$ which is the diagonal of the n-dimensional cube $[v_{l},v_{h}]^{n}$
and also lies at the intersection of all the simplexes of the permutation
regions (equation \ref{eq:permutation-regions}) and does not affect
the asymptotic order of components of $x$. Hence all the other eigenvectors
decide the asymptotic order and lie in the non-positive quadrants.
The eigenvectors of $B$ with least negative eigenvalues (which are
the eigenvectors of A with most negative eigenvalues) have components
which are equal on each color class (Appendix \ref{subsec:B-eigenvectors})
and hence should direct the system towards a correct vertex color-sorting
in state $s=0$. We also know that all the eigenvalues of the coefficient
matrix in the state $s=0$ are negative, and hence, if the system
starts with the correct order of components, i.e. $T(x_{0})=Q_{0}(x_{0})$
then the system state $x$ will continue to lie in the same permutation
region with time.

\textit{Explanation for condition 2: }Assuming very fast charging
and using the instantaneous charging approximation, we see from Proposition
\ref{prop:instantaneous-charging} that the state transition $\Delta x$
is in the direction of the $k^{th}$ column vector of $B$ when the
$k^{th}$ node charges. As shown in appendices \ref{subsec:B-structure}
and \ref{subsec:B-column}, in case of weak coupling, i.e. $c_{i}\gg c_{c}$
the $k^{th}$ column vector is constant for all non-charging components
and hence $\Delta x$ does not change the order of the non-charging
components. The variation in the non-charging components of $\Delta x$
is inversely propotional to $n+m$ and hence with larger $n$ and
$m$ the charging transition $x\rightarrow x+\Delta x$ tries to preserve
the order of non-charging components more (Appendix \ref{subsec:B-column}).
As shown in figure \ref{fig:weak_coupling} the effect of charging
transitions can be seen as small kinks in the waveforms of non-charging
components. The magnitude of these kinks is negligible for weak coupling
(a), and is clearly visible for stronger coupling (c). Even though
the charging transitions affect the non-charging components in the
case of a stronger coupling, the order of non-charging components
is not disturbed, i.e. the change in all the non-charging components
is almost the same (Appendix \ref{subsec:B-column}).

\textit{Explanation for condition 3: }If the system state $x$ is
close to the eigenspace of $B$ with least negative eigenvalue, say
$E_{1}$, then $x$ has components which are close for the same color
class (Appendix \ref{subsec:B-eigenvectors}) and components of different
color classes will have more separation between them by comparison.
If the components of $x$ are ordered in increasing order then it
will have a pattern $\{x_{a_{1}},x_{a_{2}},...,x_{b_{1}},x_{b_{2}},...,x_{c_{1}},x_{c_{2}},...\}$,
where $a_{i}$ are the indices for one color class, $b_{i}$ for another
etc. If the order among the color classes is changed, say $\{x_{b_{1}},x_{b_{2}},...,x_{a_{1}},x_{a_{2}},...,x_{c_{1}},x_{c_{2}},...\}$
even then $x$ will be close to the eigenspace $E_{1}$ because of
the multiplicity of the least negative eigenvalue (Appendix \ref{subsec:B-eigenvectors}).
The charging transitions of nodes of the same color class will occur
consecutively with little time durations between them. This little
time does not allow the system state $s=0$ which occurs between these
transitions to change the order. When all nodes of one particular
class have undergone charging processes, the system state $x$ again
comes close to the eigenspace $E_{1}$ because the components of $x$
belonging to the same color class are again close to each other. Hence,
the state $s=0$ does not disurb this order as well. The cycle repeats
with very fast consecutive charging processes of the next color class.
This also gives rise to clustering of the phases of nodes w.r.t. their
color classes. 

\begin{figure}
\begin{centering}
\includegraphics{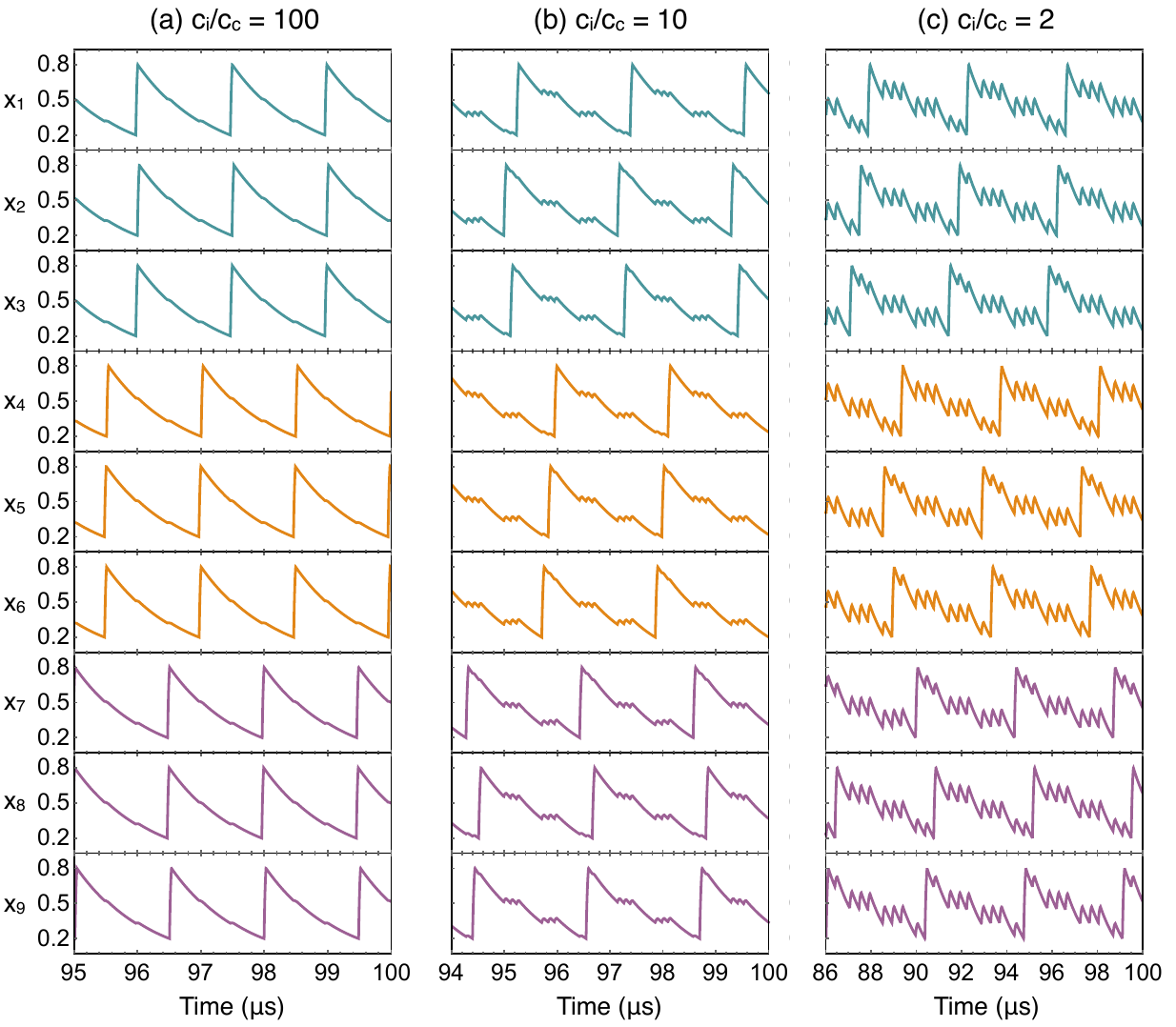}
\par\end{centering}
\caption{Simulation waveforms of a coupled relaxation oscillator circuit connected
in a complete 3-partite graph with 3 nodes in each color class for
different $c_{i}/c_{c}$ values (a) 100, (b) 10, and (c) 2. As can
be seen, the charging transitions do not affect the non-charging components
of the state vector $x$in case of weak coupling (a). In case of stronger
coupling (c), even though the charging transitions affect the non-charging
components (seen as small kinks in the waveforms), the order of non-charging
components is undisturbed as discussed in Appendix \ref{subsec:B-column}.
\label{fig:weak_coupling}}
\end{figure}

\section{Cycles in the general case}

Adjacency matrices of non-simple graphs can be considered as perturbations
to the prototypical cases of complete graphs, and using perturbation
theory of matrices we can say that the eigenvectors of perturbed matrices
are rotations of the original eigenvectors \citep{davis_rotation_1963},
where the extent of rotation depend on the amount of perturbation.
Hence, even in non-simple cases, the eigenvectors with most negative
eigenvalues of the adjacency matrix will tend to have components which
are close to each other within the same color class and away from
those of different color classes. This property has been explored
with mathematical detail in works related to spectral algorithms for
graph coloring \citep{Alon:aa,Aspvall:aa,McSherry:aa}. When viewed
from the perspective of a coupled relaxation oscillator system of
(\ref{eq:system-identical}), the above mentioned property of eigenvectors
of the adjacency matrix $A$ with most negative eigenvalues will be
shared by the eigenvectors of $B$ with the least negative eigenvalues
because of Proposition \ref{prop:eigenvalues-B-A}. As shown above,
the asymptotic order of components of the system state in the discharge
phase $s=0$ of coupled relaxation oscillator systems depend on the
least negative eigenvalues of $B$. Hence, the relaxation oscillator
systems in state $s=0$ is expected to direct the system towards correct
vertex color sorting, which satisfies condition 1 of Proposition \ref{prop:cycle-conditions}.
Conditions 2 and 3 also depend on eigenvectors and hence similar arguments
of matrix perturbation can be applied.

\section{Prototypical experiments and validation}

Vanadium dioxide (VO\textsubscript{2}) is a prototypical insulator-metal
transition material system with strong electron-electron and electron-phonon
interactions that has been the subject of intense fundamental and
applied research . The above room temperature phase transition (transition
temperature = 340 K) in VO\textsubscript{2} has an electronic component
characterized by an abrupt change in resistivity (and carrier concentration)
up to five orders in magnitude; the large increase in carrier concentration
can be attributed to collapse of the 0.6 eV band gap (optically measured)
across the insulator-to-metal transition. Further, the phase transition
also has a structural component wherein the crystal structure evolves
from the monoclinic M1 phase with dimerized vanadium atoms in the
low-temperature insulating state to rutile crystal structure in the
high-temperature metallic phase.

\begin{figure}
\begin{centering}
\includegraphics{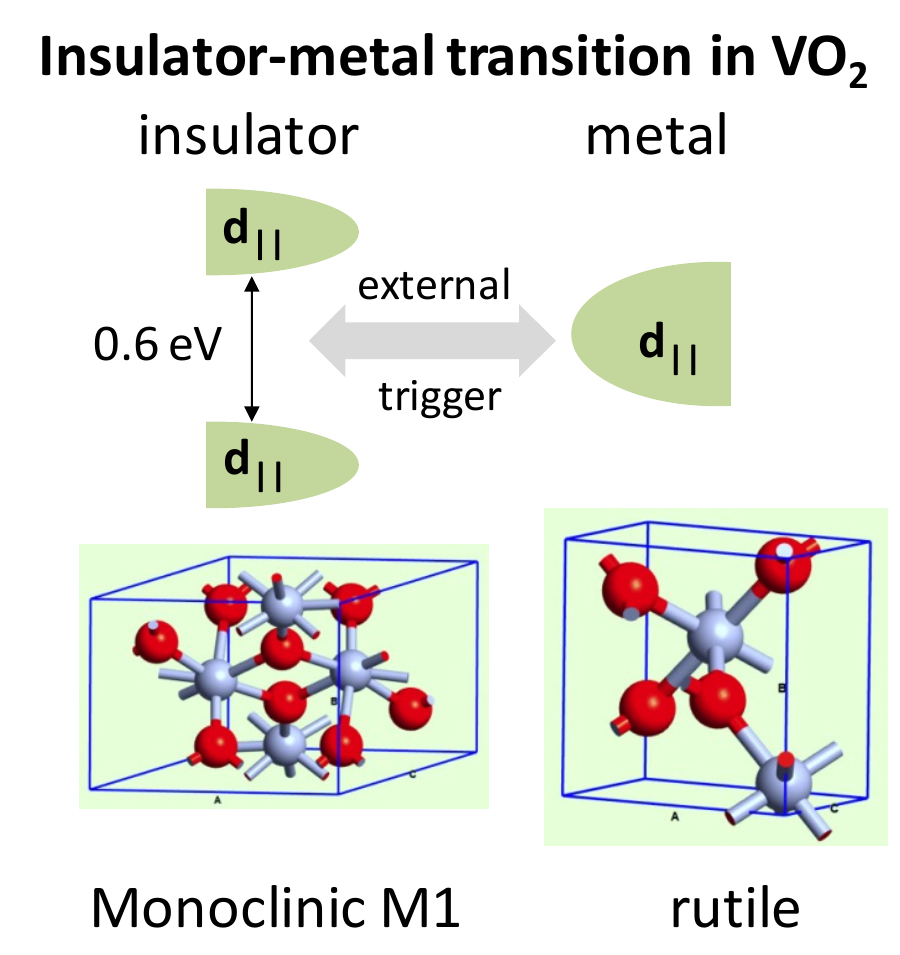}
\par\end{centering}
\caption{Insulator-metal transition in VO\protect\textsubscript{2} showing
phase change}
\end{figure}

Despite intense research efforts, the origin of the phase transition
in VO\textsubscript{2} has been a subject of debate with competing
theories suggesting that the driving force behind the transition could
be Mott or Peierl\textquoteright s physics as well as a weighted combination
of both the mechanisms. Further, the electrically induced phase transition
is VO\textsubscript{2} which is relevant to electronic VO\textsubscript{2}
devices like the relaxation oscillators discussed here, is debated
to be carrier density driven or of electro-thermal nature.

With respect to the relaxation oscillators discussed here, the unknown
nature of origin of the electrically induced phase transition in VO\textsubscript{2}
entails that the critical voltage ($V_{h}$, $V_{l}$ in figure 
of main text)/ current cannot be quantitatively predicted even though
empirically measurements indicate that the typical critical electric
field values are in the 20-60 kV/cm range. However, we emphasize that
knowing $V_{l}$ and $V_{h}$, the oscillators can de designed in
a deterministic manner.

The details of the experiments, experimental conditions and the theory
connecting experiments with linear dynamical systems for the case
of a single and a coupled pair of oscillators can be found in the
authors' earlier publications in \citep{parihar2015synchronization}.

\appendix

\part*{Appendix}

\section{The coefficient matrix in prototypical case}

In this section we give an analytical treatment of the structure of
the coefficient matrix and its eigen spectrum in the prototypical
case. We consider the prototypical case where the graph is complete
and the number of nodes in each color class is equal. When $n$ identical
oscillators with internal capacitances $c_{i}$ are connected in a
$k$-partite graph, and the coupling is purely capacitive with same
coupling capacitances $c_{c}$ used for all pairs, then the system
evolution is described as in equation \ref{eq:system-identical}.
In the simple case when each partition has equal number of nodes $m=n/k$,
then more can be said about the coefficient matrix $B=(c_{i}I-c_{c}A+c_{c}nI)^{-1}$.
Let $F=(c_{i}I-c_{c}A+c_{c}nI)^{-1}$ so that $B=F^{-1}$. Then $F$
can be written as a repeated partitioned matrix as
\[
F=U\otimes G+V\otimes E
\]
where $\otimes$ is the kronecker product of matrices, $U$ and $V$
are $k\times k$ matrices, $G$ and $E$ are $m\times m$ matrices,
and the matrices are given by 
\begin{eqnarray*}
U & = & c_{ic}I_{k}\\
G & = & I_{m}\\
V & = & I_{k}-J_{k}\\
E & = & c_{c}J_{m}
\end{eqnarray*}
with $I_{m}$ being the $m\times m$ identity matrix, $J_{m}$ the
$m\times m$ matrix with all ones, and $c_{ic}=\left(c_{i}+nc_{c}\right)$.

\subsection{\label{subsec:B-eigenvectors}Eigenvectors of $B$ in prototypical
case}

For $n$ nodes and $k$ color classes, let $U$ be a $n\times m$
matrix where each column vector corresponds to one color class where
the components of that particular class are $k/n$ and rest are $0$.
As such, $U^{T}AU$ is a $k\times k$ matrix with each entry equal
to the average of entries of the corresponding block in $A$. In the
simple case of complete graph with equal number of nodes in each class,
$U^{T}AU=J-I$ where $J$ is a square matrix of all ones and $I$
is the identity matrix. If $x$ is an eigenvector of $U^{T}AU$ then
\begin{eqnarray*}
U^{T}AUx & = & \lambda x\\
UU^{T}A(Ux) & = & \lambda(Ux)
\end{eqnarray*}

Now $UU^{T}A$ is just the scaled version of $A$ and hence, 
\[
\alpha A(Ux)=\lambda(Ux)
\]

Therefore if $x$ is an eigenvector of $U^{T}AU$ then $Ux$ is an
eigenvector of $A$. Also the number of non-zero eigenvalues of $A$
are $k$ which is equal to the rank of $U^{T}AU$ which is full-rank.
Hence all the eigenvectors of $A$ can be described using the eigenvectors
of $U^{T}AU$ and they have equal components in a single color class.
$J-I$ has an eigenvalue $-1$ with multiplicity $n-1$, and an eigenvalue
$n-1$, and so does $A$. Now the eigenvectors of $B$ with the least
negative eigenvalues are same as that of $A$ with most negative eigenvalues
(Proposition \ref{prop:eigenvalues-B-A}). Hence, the eigenvalues
of $B$ with least negative eigenvalues are constant on each color
class.

\subsection{\label{subsec:B-structure}Structure of the inverse of $F$ in prototypical
case}
\begin{prop*}
If $F=(c_{i}I-c_{c}A+c_{c}nI)$ is the coefficient matrix of the network,
then $B=F^{-1}$ has the same partitioned form as $F$. More precisely,
$B=F^{-1}$ can be written as
\[
F^{-1}=\frac{1}{c_{ic}}\left(\frac{1}{c_{ic}}U\otimes G+D\otimes E\right)
\]
where $U$, $G$ and $E$ are the same matrices that describe $F$,
$c_{ic}$ is as defined above, and D is a $k\times k$ matrix given
by
\begin{eqnarray*}
D & = & \frac{1}{c_{i}+(n+m)c_{c}}\left(\beta J_{k}-I_{k}\right)
\end{eqnarray*}

and
\[
\beta=\frac{c_{i}+nc_{c}}{c_{i}+mc_{c}}
\]
\end{prop*}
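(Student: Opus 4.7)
The plan is to verify the claimed formula by direct multiplication, exploiting both the mixed-product property of Kronecker products and the fact that the two-dimensional algebra spanned by $\{I_{k},J_{k}\}$ is closed under multiplication. First I would simplify the notation: since $U\otimes G=c_{ic}I_{n}$ and $V\otimes E=c_{c}(I_{k}-J_{k})\otimes J_{m}$, the given decomposition collapses to $F=c_{ic}I_{n}+c_{c}(I_{k}-J_{k})\otimes J_{m}$. Likewise the proposed inverse simplifies to $B=\tfrac{1}{c_{ic}}I_{n}+\tfrac{c_{c}}{c_{ic}}D\otimes J_{m}$, so the entire task reduces to determining $D$ such that $FB=I_{n}$.

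Next I would multiply $FB$ using $(X\otimes Y)(X'\otimes Y')=(XX')\otimes(YY')$ together with the idempotent-like identity $J_{m}^{2}=mJ_{m}$. Every non-identity term acquires the tensor factor $J_{m}$, so collecting them and setting their coefficient to zero yields a single $k\times k$ matrix equation for $D$:
\[
\bigl[(c_{ic}+mc_{c})I_{k}-mc_{c}J_{k}\bigr]D=J_{k}-I_{k}.
\]
Setting $\alpha:=c_{ic}+mc_{c}=c_{i}+(n+m)c_{c}$ rewrites this as $(\alpha I_{k}-mc_{c}J_{k})D=J_{k}-I_{k}$.

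Then I would solve with the ansatz $D=aI_{k}+bJ_{k}$, which is natural because $I_{k}$ and $J_{k}$ generate a commutative two-dimensional subalgebra via $J_{k}^{2}=kJ_{k}$. Expanding the product and using the assumption $n=mk$ so that $\alpha-mc_{c}k=c_{i}+mc_{c}$, the equation decouples by matching coefficients of $I_{k}$ and $J_{k}$ into a two-by-two linear system; solving gives $a=-1/\alpha$ and $b=c_{ic}/\bigl[\alpha(c_{i}+mc_{c})\bigr]$. Factoring out $1/\alpha$ yields $D=\tfrac{1}{\alpha}(\beta J_{k}-I_{k})$ with $\beta=(c_{i}+nc_{c})/(c_{i}+mc_{c})$, which matches the claim and automatically shows that $F^{-1}$ has the same Kronecker-partitioned form as $F$.

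I expect the main obstacle to be purely bookkeeping: there are three very similar constants floating around, $c_{ic}=c_{i}+nc_{c}$, $c_{i}+mc_{c}$, and $\alpha=c_{i}+(n+m)c_{c}$, and one must be careful with the signs produced by $(I_{k}-J_{k})$ and with the substitution $mk=n$ when simplifying $mc_{c}\cdot k$. No deeper ingredient is required; the algebraic content is entirely in the observation that $F$ lies in the tensor product of the commutative algebras $\langle I_{k},J_{k}\rangle$ and $\langle I_{m},J_{m}\rangle$, and that this algebra is closed under inversion of its invertible elements, which forces $F^{-1}$ to have the advertised partitioned form.
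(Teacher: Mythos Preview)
Your proposal is correct and the algebra checks out: the reduction to the $k\times k$ equation $[(c_{ic}+mc_{c})I_{k}-mc_{c}J_{k}]D=J_{k}-I_{k}$ is right, and solving via the ansatz $D=aI_{k}+bJ_{k}$ together with $mk=n$ gives exactly $a=-1/\alpha$ and $b=c_{ic}/[\alpha(c_{i}+mc_{c})]$, hence $D=\tfrac{1}{\alpha}(\beta J_{k}-I_{k})$ as claimed.

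The route, however, is genuinely different from the paper's. The paper does not verify the formula by multiplication; instead it invokes two cited identities from \citep{1981}. First, since $E=c_{c}J_{m}$ has rank one, it applies a closed-form Kronecker inversion
\[
(U\otimes G+V\otimes E)^{-1}=U^{-1}\otimes G-\bigl[U+(\mathrm{tr}\,E)V\bigr]^{-1}VU^{-1}\otimes E,
\]
and then, because the inner bracket $U+(\mathrm{tr}\,E)V=(c_{i}+(n+m)c_{c})I_{k}-mc_{c}J_{k}$ is again a rank-one perturbation of a scalar matrix, it applies a Sherman--Morrison-type formula to invert it explicitly. Combining the pieces and simplifying $J_{k}^{2}=kJ_{k}$ then yields $D$. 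Your argument replaces both cited lemmas by the single structural observation that $F$ lies in the commutative subalgebra generated by $I_{k},J_{k},I_{m},J_{m}$, which is closed under inversion; this makes the proof self-contained and explains \emph{a priori} why $F^{-1}$ must have the same partitioned form, whereas the paper's version is more of a mechanical computation that happens to land on the same shape. The trade-off is that the paper's method would generalize more readily to cases where $V$ or $E$ is not of $I,J$ type but still low rank, while yours is cleaner for precisely the prototypical case at hand.
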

\begin{proof}
As described above, $F=U\otimes G+V\otimes E$. Here $G$ is a identity
matrix and $E$ is a rank 1 matrix. Hence, as shown in \citep{1981},
the inverse for $F$ can be calculated as
\[
F^{-1}=U^{-1}\otimes G-\left[U+\left(\mathrm{tr\,}E\right)V\right]^{-1}VU^{-1}\otimes E
\]

Now,
\begin{eqnarray*}
\mathrm{tr\,}E & = & mc_{c}\\
U^{-1} & = & \frac{1}{c_{ic}}I_{k}\\
VU^{-1} & = & \frac{1}{c_{ic}}\left(I_{k}-J_{k}\right)\\
\left[U+\left(\mathrm{tr}E\right)V\right]^{-1} & = & \left[U+mc_{c}V\right]^{-1}\\
 & = & \left[\left(c_{i}+(n+m)c_{c}\right)I_{k}-mc_{c}J_{k}\right]^{-1}\\
 & := & \left[P-Q\right]^{-1}
\end{eqnarray*}

As $Q$ is a rank 1 matrix, we can use another result from \citep{1981}:
\begin{eqnarray*}
\left[U+\left(\mathrm{tr}\,E\right)V\right]^{-1} & = & \left[P-Q\right]^{-1}\\
 & = & P^{-1}+\frac{1}{1-\mathrm{tr}\,QP^{-1}}P^{-1}QP^{-1}\\
 & = & \frac{1}{c_{i}+(n+m)c_{c}}I_{k}+\frac{1}{1-\frac{nc_{c}}{c_{i}+(n+m)c_{c}}}\frac{1}{\left(c_{i}+(n+m)c_{c}\right)^{2}}mc_{c}J_{k}\\
 & = & \frac{1}{c_{i}+(n+m)c_{c}}\left(I_{k}+\frac{mc_{c}}{c_{i}+mc_{c}}J_{k}\right)
\end{eqnarray*}

Combining the parts, and noting that $J_{k}^{2}=kJ_{k}$, we get
\begin{eqnarray*}
\left[U+\left(\mathrm{tr}E\right)V\right]^{-1}VU^{-1} & = & \frac{1}{c_{i}+(n+m)c_{c}}\left(I_{k}+\frac{mc_{c}}{c_{i}+mc_{c}}J_{k}\right)\frac{1}{c_{ic}}\left(I_{k}-J_{k}\right)\\
 & = & \frac{1}{c_{ic}\left(c_{i}+(n+m)c_{c}\right)}\left(I_{k}-\beta J_{k}\right)
\end{eqnarray*}

where,
\[
\beta=\frac{c_{i}+nc_{c}}{c_{i}+mc_{c}}
\]

Finally,
\[
F^{-1}=\frac{1}{c_{ic}}\left[I_{k}\otimes I_{m}+\frac{1}{c_{i}+(n+m)c_{c}}\left(\beta J_{k}-I_{k}\right)\otimes c_{c}J_{m}\right]
\]

and hence,
\begin{equation}
B=F^{-1}=\frac{1}{c_{ic}}\left(\frac{1}{c_{ic}}U\otimes G+D\otimes E\right)\label{eq:F-inverse}
\end{equation}
\end{proof}

\subsection{\label{subsec:B-column}Column vector of $B$ in prototypical case}

Using equation \ref{eq:F-inverse} we can deduce properties of the
column vector of $B$.
\begin{prop*}
Let $B_{k}$ be the $k^{th}$ column vector of $B$ and $B_{kl}$
be the $(k,l)^{th}$ element of $B$. For the components of $B_{k}$
there are only 3 kinds of values.

\begin{enumerate}
\item For the $k^{th}$ component, 
\[
B_{kk}=\frac{1}{c_{ic}}\left(1+\alpha(\beta-1)\right)
\]
\item For all other components in the same class as the $k^{th}$ component,
i.e. when $k^{th}$ and $l^{th}$ node are in the same color class
\[
B_{kl}=\frac{1}{c_{ic}}\alpha(\beta-1)
\]
\item For all other components of $B_{k}$ which are not in the same partition/color
class as the $k^{th}$ node, i.e. when $k^{th}$ and $j^{th}$ node
are not in the same class
\[
B_{kj}=\frac{1}{c_{ic}}\alpha\beta
\]
where 
\[
\alpha=\frac{c_{c}}{c_{i}+(n+m)c_{c}}
\]
\item The difference between $B_{kl}$ and $B_{kj}$ w.r.t. $B_{kk}$ is
given by:
\[
\frac{B_{kj}-B_{kl}}{B_{kk}}=\frac{1}{r+n+m+\frac{n-m}{r+m}}
\]
where $r=c_{i}/c_{c}$. As can be seen, this difference can be made
very small by weak coupling, i.e. $c_{c}\ll c_{i}$, but more importantly
for increasing $n$ and $m$ this difference reduces
\end{enumerate}
\end{prop*}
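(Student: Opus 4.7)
The plan is to read the three cases directly off the explicit formula for $B$ given by equation \ref{eq:F-inverse}, and then compute the ratio in item 4 by algebraic simplification. Recall that
\[
B=\frac{1}{c_{ic}}\!\left(\frac{1}{c_{ic}}U\otimes G+D\otimes E\right)
=\frac{1}{c_{ic}}\!\left(I_{n}+D\otimes E\right),
\]
since $U=c_{ic}I_{k}$ and $G=I_{m}$ give $\frac{1}{c_{ic}}U\otimes G=I_{k}\otimes I_{m}=I_{n}$. The Kronecker product $D\otimes E$ has a block structure of $k\times k$ blocks of size $m\times m$, where the $(a,b)$ block equals $D_{ab}E$. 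Using $D_{ab}=\frac{\beta-\delta_{ab}}{c_{i}+(n+m)c_{c}}$ and $E_{ij}=c_{c}$, every entry of the $(a,b)$ block of $D\otimes E$ equals $\alpha(\beta-\delta_{ab})$, with $\alpha=\frac{c_{c}}{c_{i}+(n+m)c_{c}}$.

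First I would verify the three cases by identifying which block of $D\otimes E$ the entry $(k,\ell)$ of $B$ lives in and whether $k=\ell$. If $k=\ell$ (diagonal), the $I_{n}$ term contributes $1$ and the Kronecker term contributes $\alpha(\beta-1)$ (same-class block with $a=b$), giving case 1. If $k\neq\ell$ but both indices label nodes in the same color class, the $I_{n}$ term contributes $0$ and we are still in a diagonal block of $D$, so the Kronecker term contributes $\alpha(\beta-1)$; this is case 2. If $k$ and $j$ lie in different classes we sit in an off-diagonal block of $D$ ($a\neq b$), so the contribution is $\alpha\beta$; this is case 3. Throughout, one only needs care with the bookkeeping that ``block index of $D$'' corresponds to ``color class,'' which is exactly how $F$ was partitioned.

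For the ratio in item 4, observe that items 2 and 3 yield the clean identity $B_{kj}-B_{k\ell}=\frac{\alpha}{c_{ic}}$, so
\[
\frac{B_{kj}-B_{k\ell}}{B_{kk}}=\frac{\alpha}{1+\alpha(\beta-1)}=\frac{1}{1/\alpha+\beta-1}.
\]
The plan is then to simplify $1/\alpha$ and $\beta-1$ separately in terms of $r=c_{i}/c_{c}$: dividing numerator and denominator of $\alpha$ by $c_{c}$ gives $1/\alpha=r+n+m$, and similarly $\beta-1=\frac{(n-m)c_{c}}{c_{i}+mc_{c}}=\frac{n-m}{r+m}$. Substituting these into the denominator yields exactly $r+n+m+\frac{n-m}{r+m}$, as required.

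There is no real obstacle here beyond careful index management; the technical content has already been absorbed into the inversion formula \ref{eq:F-inverse}. The only mild subtlety is keeping track of the fact that ``the $k^{\mathrm{th}}$ node'' refers to a global index in $[1,n]$ while the block structure of $D\otimes E$ is indexed by a pair (class label in $[1,k]$, intra-class label in $[1,m]$); once that correspondence is fixed, every case reduces to looking up a single entry of $D$ and a single entry of either $I_{n}$ or $E$.
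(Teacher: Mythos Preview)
Your proposal is correct and follows exactly the route the paper indicates: the paper itself offers no detailed proof here, merely stating that the result is deduced ``using equation~\ref{eq:F-inverse},'' and your sketch does precisely that by reading off the block entries of $I_{n}+D\otimes E$ and then simplifying the ratio via $1/\alpha=r+n+m$ and $\beta-1=(n-m)/(r+m)$. There is nothing to add; the computation is straightforward once the inversion formula for $F$ is in hand.
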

\bibliographystyle{unsrt}
\bibliography{refs_supp}

\end{document}